\relax
%File: formatting-instructions-latex-2021.tex
%release 2021.1
\documentclass[letterpaper]{article} % DO NOT CHANGE THIS
\usepackage{aaai21}  % DO NOT CHANGE THIS
\usepackage{times}  % DO NOT CHANGE THIS
\usepackage{helvet} % DO NOT CHANGE THIS
\usepackage{courier}  % DO NOT CHANGE THIS
\usepackage[hyphens]{url}  % DO NOT CHANGE THIS
\usepackage{graphicx} % DO NOT CHANGE THIS
\urlstyle{rm} % DO NOT CHANGE THIS
  % DO NOT CHANGE THIS
\usepackage{natbib}  % DO NOT CHANGE THIS AND DO NOT ADD ANY OPTIONS TO IT
\usepackage{caption} % DO NOT CHANGE THIS AND DO NOT ADD ANY OPTIONS TO IT
\frenchspacing  % DO NOT CHANGE THIS
\setlength{\pdfpagewidth}{8.5in}  % DO NOT CHANGE THIS
\setlength{\pdfpageheight}{11in}  % DO NOT CHANGE THIS
%\nocopyright
%PDF Info Is REQUIRED.
% For /Author, add all authors within the parentheses, separated by commas. No accents or commands.
% For /Title, add Title in Mixed Case. No accents or commands. Retain the parentheses.
\usepackage[switch]{lineno}
\makeatletter
    \AtBeginDocument{%
      \@ifpackageloaded{amsmath}{%
        \newcommand*\patchAmsMathEnvironmentForLineno[1]{%
          \expandafter\let\csname old#1\expandafter\endcsname\csname #1\endcsname
          \expandafter\let\csname oldend#1\expandafter\endcsname\csname end#1\endcsname
          \renewenvironment{#1}%
                           {\linenomath\csname old#1\endcsname}%
                           {\csname oldend#1\endcsname\endlinenomath}%
        }%
        \newcommand*\patchBothAmsMathEnvironmentsForLineno[1]{%
          \patchAmsMathEnvironmentForLineno{#1}%
          \patchAmsMathEnvironmentForLineno{#1*}%
        }%
        \patchBothAmsMathEnvironmentsForLineno{equation}%
        \patchBothAmsMathEnvironmentsForLineno{align}%
        \patchBothAmsMathEnvironmentsForLineno{flalign}%
        \patchBothAmsMathEnvironmentsForLineno{alignat}%
        \patchBothAmsMathEnvironmentsForLineno{gather}%
        \patchBothAmsMathEnvironmentsForLineno{multline}%
      }{}
    }
\makeatother

\setlength{\abovedisplayskip}{0pt}
\setlength{\belowdisplayskip}{0pt}
\setlength{\abovedisplayshortskip}{0pt}
\setlength{\belowdisplayshortskip}{0pt}

\usepackage{algorithm}
\usepackage[noend]{algpseudocode}
\usepackage{amssymb}
\usepackage{mathtools}
\usepackage{amsmath}
\usepackage{tikz}
\usepackage{tikz-3dplot}
\usepackage{pgfplots}
\usepgfplotslibrary{ternary}
\usetikzlibrary{calc,shapes}
\usepackage{todonotes}
\usepackage{dsfont}
\usepackage{enumitem}
\usepackage{tikz}
\usepackage{amsthm}
\usepackage{thm-restate}
\usepackage{booktabs}
\usepackage{multirow}
\usepackage{newfloat}
\usepackage{wrapfig}
\usepackage{dsfont}
\usepackage{comment}
\usepackage{multirow}
\usepackage{mleftright}

\pdfinfo{
/Title (Persuading Voters in District-based Elections)
/Author (Matteo Castiglioni, Nicola Gatti)
/TemplateVersion (2021.1)
} %Leave this

\setcounter{secnumdepth}{0} %May be changed to 1 or 2 if section numbers are desired.

% The file aaai21.sty is the style file for AAAI Press
% proceedings, working notes, and technical reports.
%

% Title

% Your title must be in mixed case, not sentence case.
% That means all verbs (including short verbs like be, is, using,and go),
% nouns, adverbs, adjectives should be capitalized, including both words in hyphenated terms, while
% articles, conjunctions, and prepositions are lower case unless they
% directly follow a colon or long dash

\def\cvec{\textbf{c}}
\newcommand{\C}{\mathcal{C}}
\newcommand{\W}{\mathcal{W}}

\newcommand{\ceil}[1]{\lceil #1 \rceil}
\newcommand{\set}[1]{\left\{ #1 \right\}}

\newcommand{\s}{\mathbf{s}}

\newcommand{\EX}{\mathbb{E}}% expected value

\def\eps{\epsilon}

\def\pvec{\mathbf{p}}

\def\S{\mathcal{S}}
\def\cvec{\textbf{c}}
\def\Q{\mathcal{Q}}
\def\C{\mathbf{C}}
\def\yvec{\textbf{y}}
\newcommand{\R}{\mathbb{R}}

\def\betadue{a}
\def\P{\mathcal{P}}
\def\bvec{\mathbf{b}}
\def\gammavec{\boldsymbol{\gamma}}
\def\pvar{i}
\def\hvar{l}
\def\gvar{o}
\def\qvar{v}

\newtheorem{example}{Example}

\newtheorem{lemma}{Lemma}

\newtheorem{corollary}{Corollary}
\newtheorem{proposition}{Proposition}
\newtheorem{definition}{Definition}

%\title{Persuading Voters in District-based Elections}
%\author{Paper ID: 4662}
%\iffalse
%Example, Single Author, ->> remove \iffalse,\fi and place them surrounding AAAI title to use it
%\title{My Publication Title --- Single Author}
%\author {
    % Author
%    Author Name \\
%}

%\affiliations{
%    Affiliation \\
%    Affiliation Line 2 \\
%    name@example.com
%}
%\fi

%\iffalse
%Example, Multiple Authors, ->> remove \iffalse,\fi and place them surrounding AAAI title to use it
\title{Persuading Voters in District-based Elections}
\author {
    % Authors
	Matteo Castiglioni,\textsuperscript{\rm 1}
    Nicola Gatti \textsuperscript{\rm 1}
        \\
}
\affiliations {
    % Affiliations
    \textsuperscript{\rm 1} Politecnico di Milano, Piazza Leonardo da Vinci 32, I-20133, Milan, Italy \\
    matteo.castiglioni@polimi.it, nicola.gatti@polimi.it
}
%\fi
\begin{document}

%\linenumbers

\maketitle

\begin{abstract}
	We focus on the scenario in which an agent can exploit his information advantage to manipulate the outcome of an election. 
	In particular, we study \emph{district-based} elections with two candidates, in which the winner of the election is the candidate that wins in the majority of the districts.
	District-based elections are adopted worldwide (\emph{e.g.}, UK and USA) and are a natural extension of widely studied voting mechanisms (\emph{e.g.}, $k$-voting and plurality voting).
	We resort to the Bayesian persuasion framework, where the manipulator (\emph{sender}) strategically discloses information to the voters (\emph{receivers}) that update their beliefs rationally. 
	We study both \emph{private} signaling, in which the sender can use a private communication channel per receiver, and \emph{public} signaling, in which the sender can use a single communication channel for all the receivers. 
	Furthermore, for the first time, we introduce \emph{semi-public} signaling in which the sender can use a single communication channel per district.
	We show that there is a sharp distinction between private and \mbox{(semi-)}public signaling.
	In particular, optimal private signaling schemes can provide an arbitrarily better probability of victory than \mbox{(semi-)}public ones and can be computed efficiently, while optimal \mbox{(semi-)}public signaling schemes cannot be approximated to within any factor in polynomial time unless $\mathsf{P}=\mathsf{NP}$. 
	However, we show that reasonable relaxations allow the design of multi-criteria PTASs for optimal \mbox{(semi-)}public signaling schemes.
	In doing so, we introduce a novel property, namely \emph{comparative stability}, and we design a bi-criteria PTAS for public signaling in general Bayesian persuasion problems beyond elections when the sender's utility function is state-dependent.
\end{abstract}

\section{Introduction}
The fairness and efficiency of democratic elections largely depend on the  news provided by the media. 
Indeed, often, citizens are called to express opinions on complex choices they do not know deeply enough to express informed judgments. 
Therefore, multiple and reliable sources of information providing fair and in-depth coverage of the public debate are crucial to guarantee the democratic process. 
However, most of the information shaping the voters' opinions is not disclosed to inform in a disinterested way but instead aims to direct voters' political orientation, thus \emph{persuading} them to prefer one specific candidate over another.
As recently showed by \citet{allcott2017social} and \citet{guess2018selective} for the 2016 US presidential election, the spread of fake news has become a major public concern for democracy. 
The problem of assessing the extent to which it is possible to manipulate an election has received considerable attention under the general framework of election control and has been investigated according to several perspectives, such as control by bribery \cite{faliszewski2009llull, erdelyi2020complexity} or by adding and deleting candidates and voters \cite{loreggia2015controlling, faliszewski2011multimode,liu2009parameterized,chen2017elections}.
More recently, \citet{sina2015adapting}, \citet{faliszewski2018opinion}, \citet{wilder2018controlling}, \citet{wilder2019defending}, and \citet{castiglioni2020election} studied social influence as a means of election control.
In this paper, we pose the following question: \emph{can an informed agent use his information advantage to influence an election's outcome by the partial disclosure of information to rational voters? }
According to the classical Bayesian persuasion framework by \citet{kamenica2011bayesian}, the above problem can be formulated as a game with asymmetric information, where a sender can influence the behavior of the receiver(s) through the strategic provision of payoff-relevant information.
In particular, the sender can strategically reveal information by means of a signaling scheme that determines ``who knows what'' about the parameters that govern the payoff functions. 
\citet{alonso2016persuading}, \citet{chan2019pivotal}, and \citet{bardhi2018modes} provide the seminal attempts to apply the Bayesian persuasion framework to voting. 
More recently, \citet{castiglioni2019persuading} and \citet{castiglioni2020} investigated its computational issues.
All the aforementioned works focus on $k$-voting or plurality-voting elections. 
Differently, in this paper, we study for the first time how Bayesian persuasion can be adopted in more challenging settings such as district-based elections with two candidates, in which the winner of the election is the candidate winning in the majority of the districts. 
We focus on the setting with \emph{no inter-agent externalities}  where each receiver's utility depends only on his action and the realized state of nature,  but not on the other receivers' actions. 
This assumption is well-motivated, as voting for the most preferred candidate is a weakly dominant strategy in two-candidate elections. 
Two forms of signals are customarily investigated in the literature. 
With \emph{private} signals, the sender can target different information to different receivers. 
Instead, with \emph{public} signals, the sender can only communicate the same information to every receiver. 
Even if private persuasion may be more beneficial for the sender, sometimes, as in election settings where there are too many receivers, privately communicating to each receiver may be impracticable. 
At the same time, public communication is much easier to implement, \emph{e.g.}, through TVs or newspapers. 
We introduce a new form of signaling, called \emph{semi-public}, to model situations between private and public settings in district-based elections, where all the receivers of the same district observe the same signal, but the sender can target different information to different districts. 
Indeed, the voters are often reached by local communication shared with the voters in the same location, \emph{e.g.}, local newspapers, electoral posters and rallies. 

\paragraph{Original Contributions}

We study the efficiency and complexity of signaling in district-based elections with two candidates. 
First, we compare private, public, and semi-public signaling schemes in terms of efficiency when used to manipulate elections, showing that private signaling schemes perform arbitrarily better than (semi-)public schemes.
Then, we show that optimal private signaling schemes can be computed efficiently, while the direct use of the results provided by~\citet{castiglioni2019persuading} shows that the problem is inapproximable with (semi-)public signaling.
However, we prove that multi-criteria Polynomial-Time Approximation Schemes (PTASs) for public and semi-public signaling schemes are possible when some relaxations are made.
In particular, in the case of semi-public persuasion, we allow $\epsilon$-persuasiveness and lower the number of districts to control needed to win the election by an arbitrary constant factor w.r.t.~the majority.
Instead, in the case of public persuasion, we also need to lower the number of votes needed to win in a district by an arbitrary constant factor w.r.t.~the majority.
In doing so, we introduce a novel property, namely \emph{comparative stability}, and we design a bi-criteria PTAS to compute public signaling schemes in general Bayesian persuasion problems beyond district-based elections.
Our result extends that by \citet{xu2020tractability}, allowing state-dependent sender's utility functions and generalizing from stable to comparative stable sender's utility functions.

\paragraph{Related Works}

The seminal model of Bayesian persuasion with a single receiver is introduced by \citet{kamenica2011bayesian}.
This model is extended, allowing multiple receivers, by~\citet{bergemann2016bayes}, \citet{bergemann2016information}, \citet{wang2013bayesian}, and \citet{taneva2015information}.
Furthermore, \citet{alonso2016persuading}, \citet{bardhi2018modes} and \citet{chan2019pivotal} provide the first attempts of applying the Bayesian persuasion framework to voting. 
In particular, \citet{bardhi2018modes} and \citet{chan2019pivotal} study unanimity voting and $k$-voting rules, respectively, in settings with binary actions and state spaces. 
Instead, \citet{alonso2016persuading} employ a novel geometric tool to characterize an optimal public signaling scheme in voting. 
In addition to the works mentioned above, which provide the economic groundings of Bayesian persuasion in (simple) voting settings, other works study election problems from a computational perspective.
In particular, \citet{arieli2019private} study the problem of private Bayesian persuasion with no inter-agent externalities.
In the case of binary state spaces and $k$-voting rule, they provide a characterization of the optimal private signaling schemes.\footnote{In $k$-voting, a candidate wins if he collects at least $k$ votes.}
\citet{cheng2015mixture} study the same $k$-voting problem with public persuasion, providing a polynomial-time approximation algorithm for a relaxed version of the problem in which the number of votes needed to win the election is reduced by an arbitrary constant factor and $\epsilon$-persuasiveness is adopted.
\citet{castiglioni2019persuading} extend the previous models to settings with an arbitrary number of states of nature and candidates. 
In particular, they prove that a private signaling scheme for $k$-voting can be computed in polynomial time, while the optimal public signaling scheme is $\mathsf{NP}$-hard to approximate within any factor.
\citet{castiglioni2020} strengthen this hardness result, showing that finding a public signaling scheme that is approximately optimal and $\epsilon$-persuasive requires quasi-polynomial time, assuming the exponential time hypothesis. 
Our work is also closely related to Bayesian persuasion in general settings beyond elections and the relation between stability of the sender's utility function and the computation of approximations.
In particular, \citet{cheng2015mixture} introduce the notion of stability and show that this is a sufficient property to compute approximately optimal $\epsilon$-persuasive signaling schemes in polynomial time.
\citet{xu2020tractability} extends this framework to incorporate $\alpha$-approximable sender's utility functions and shows that approximately optimal and $\epsilon$-persuasive signaling schemes can be computed in polynomial time when the sender's utility function is stable and independent of the state of nature.

\section{Problem Formulation}

In this section, we introduce the two frameworks we use in our work: district-based elections and Bayesian persuasion.

\paragraph{District-based Elections}

There is a set of candidates $C=\{c_0,c_1\}$ and a set of voters $R=\{ r_1, \dots, r_{|R|}\}$ divided in a set $D$ of districts. 
The set of voters of district $d \in D$ is denoted with $R^d$.
Each voter casts a vote for one of the two candidates.
Once the voters expressed their preferences, the election process proceeds in two steps.
For the sake of simplicity, we study the basic case in which both steps follow a \emph{majority-voting} rule.\footnote{In majority voting, the candidate with the most votes wins.}
The election works as follows.
\begin{enumerate}
	\item For each $d \in D$, the votes expressed by all $r \in R^d$ are locally aggregated, and the candidate with the majority of the votes is elected as the winner of the district.
	\item The outcomes of all the districts are aggregated, and the candidate that is the winner in the majority of the districts is chosen as the winner of the district-based election.
\end{enumerate}
We assume that the manipulator prefers $c_0$ to be the winner of the election.
Let $\cvec\in \C$ be a tuple composed by the votes of all the voters, where $\C = C^{|R|}$. 
Similarly, $\cvec^d$ is the tuple of the votes of the voters in district $d$.
The manipulator's utility $\W: \C \to \set{0,1}$ is defined as the composition of a collection of functions $W^d:C^{|R^d|}\to C$, each representing the majority voting run in district $d$, and the function $\overline{W}:C^{|D|}\to \{0,1\}$, representing the majority voting that aggregates the outcomes of all the districts. We define $K_D=\ceil{|D|/2}$ and, for each district $d$, $K_d=\ceil{|R^d|/2}$. 
Then, $\W$ is defined as  $\W(\cvec) = \overline{W}( W^1(\cvec^1), \dots , W^D(\cvec^{|D|})  )$,  where $W^d(\cvec^d)$ assumes value $c_0$ if at least $K_d$ of the voters in district $d$ vote for candidate $c_0$, and $\overline{W}$ assumes value $1$ if and only if $c_0$ wins in at least $K_D$ districts.
We introduce some relaxations for the majority-voting rules $W^d$ and $\overline{W}$.
In the first relaxation, we allow the number of votes that the target candidate $c_0$ needs to win in each district $d$ to be smaller than $K_d$.
We denote with $W^d_\delta$ the resulting majority voting rule. 
Formally, $W^d_\delta:C^{|R^d|}\to C$ assumes value $c_0$ if at least $\ceil{(1 - \delta) \, K_{d}}$ voters in district $d$ vote for $c_0$ and $c_1$ otherwise.
The manipulator's utility function of this first relaxed problem, denoted with $\W_\delta$, is defined as  $\W_\delta = \overline{W}(  W^1_\delta(\textbf{c}^1), \dots , W^D_\delta(\textbf{c}^{|D|}))$.
In the second, stronger relaxation, we also allow the number of districts that the target candidate $c_0$ needs to control to win the election to be smaller than $K_D$.
We denote with $\overline{W}_\delta$ the resulting majority voting rule aggregating the outcomes of the districts. 
Formally, $\overline{W}_\delta:C^{|D|}\to \{0,1\}$ assumes value $1$ when $c_0$ wins in at least $\ceil{(1 - \delta) \, K_{D}}$ districts.
The manipulator's utility function of this second relaxed problem, denoted with $\W_{\delta\delta}$, is defined as $\W_{\delta\delta}(\textbf{c}) = \overline{W}_\delta(  W^1_\delta(\textbf{c}^1), \ldots , W^D_\delta(\textbf{c}^D)  )$.

\paragraph{Bayesian Persuasion Framework}

Our model includes a sender (the manipulator) and a set $R$ of receivers (voters) that must choose an action (a candidate) from the set $C=\{c_0,c_1\}$.
Each voter $r$'s utility $u_r$ depends only on his own action and a state of nature $\theta \in \Theta$ drawn from a prior distribution $\mu \in \Delta_\Theta$, where $\Delta_\Theta$ is the set of probability distributions supported on $\Theta$.
In particular,  we define $u_r:\Theta\times C\rightarrow[0,1]$, where $u_r(\theta,c)$ expresses how much receiver $r$ appreciates candidate $c$ when the state of nature is $\theta$.
We use $u_r(\theta) = u_r(\theta,c_0) - u_r(\theta,c_1)$ to denote how much voter $r$ prefers candidate $c_0$ over $c_1$, in state of nature $\theta$.
In general Bayesian persuasion problems, the sender's utility, usually denoted with $f_\theta$, depends on the state of nature $\theta$ and maps the receivers' action profiles to values in $[0,1]$.
In our setting, $f_\theta$ does not depend on $\theta$ and is set equal to $\W, \W_\delta, \W_{\delta\delta}$ depending on the specific problem we tackle.
The interaction among the sender and the receivers goes as follows (see Fig.~\ref{fig:time_line}).
The sender commits to a randomized publicly known signaling scheme $\phi$ that maps states of nature to signals for the receivers.
The signal set of a receiver $r$ is denoted with $S_r$, while $s_r \in S_r$ is a signal for receiver $r$. The set of possible signals is then $\S=\times_{r\in R} S_r$, while a profile of signals is denoted with $\s=(s_1,\dots,s_{|R|})$.
The sender observes the state of nature sampled from $\mu$ and computes $\s \in \S$ according to $\phi$.
After observing the signal $s_r$, each receiver $r$ performs a Bayesian update, and infers a posterior belief $\pvec^r \in \P$ (where $\P= \Delta_\Theta$) as follows: the realized state of nature is $\theta$ with probability $p^r_\theta=\frac{\mu_\theta \,\phi(\theta,s_r)}{\sum_{\theta' \in\Theta} \mu_{\theta'}\,\phi(\theta',s_r)}$. 
Then, each receiver plays an action maximizing his expected utility according to posterior $\pvec^r$.

\begin{figure}
	%\hspace{-.5cm}
	\includegraphics[scale=0.74]{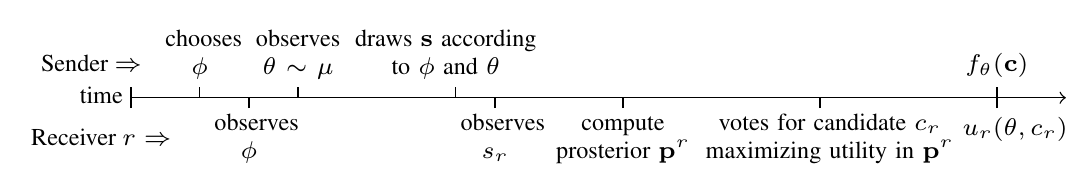}
	\caption{Interaction between the sender and a receiver.}
	\label{fig:time_line}
	%\vspace{-.5cm}
\end{figure}

We introduce three forms of signaling schemes. 
A private signaling scheme exploits a private communication channel toward each receiver. 
Sometimes this assumption is not realistic, and the sender has only a single communication channel observed by all the receivers, \emph{i.e.}, $s_r=s_r'$ for all $r,r' \in R$. 
We call these signaling schemes public.
Finally, we introduce a novel form of communication that suits our election model, where the sender has a communication channel toward each district $d$, and all the receivers in the same district receive the same signal, \emph{i.e.}, $s_r=s_r'$ for all $r,r' \in R^d$.
We call these signaling schemes semi-public.
In all these settings, a revelation-principle style argument shows that there always exists a signaling scheme that is \emph{direct} and \emph{persuasive}. 
More precisely, a signaling scheme is direct if the signals are action recommendations, while it is persuasive if each receiver has the interest to follow the recommendations.
Thus, a direct signaling scheme is a mapping $\phi:\Theta \rightarrow \Delta_\C$, and $\phi(\theta,\cvec)$ is the probability whereby the sender recommends $\cvec$ in state $\theta$.
In order for the signaling scheme to be persuasive, the receivers must have an incentive to follow the recommendation. 
This is customarily assured by forcing constraints on $\phi$ depending on the specific form of signaling.
In particular, the incentive constraints associated with a receiver $r$ in a district $d$ are:
\begin{itemize}
	\item $\sum_{\theta, {\cvec:c_r=c}}\phi(\theta,\cvec)(u_r(\theta,c)-u_r(\theta, c'))\ge 0 \ \forall c,c' \in C$ (private signaling);
	\item $\sum_{\theta} \phi(\theta,\cvec)(u_r(\theta,c_r)-u_r(\theta ,c'))\ge 0 \ \forall \cvec \in \C,c' \in C$ (public signaling);
	\item $\sum_{\theta, \cvec:\cvec^{d}=\bar \cvec} \phi(\theta,\cvec)(u_r(\theta,\bar{c}_r)-u_r(\theta, c'))\ge 0 \ \forall \bar{\cvec} \in C^{|R^d|}  ,c' \in C$ (semi-public signaling).
\end{itemize}
Similarly, a direct signaling scheme is $\epsilon$-persuasive if the incentive constraints are violated by at most $\epsilon$. 
Finally, we state the optimization problems we study in this paper.
\textsf{PRIVATE-DBE} is the problem of designing a private signaling scheme maximizing the probability of having candidate $c_0$ elected in district-based elections.
\textsf{PUBLIC-DBE} and \textsf{SEMIPUBLIC-DBE} refer to the same problem with public and semi-public signaling, respectively.

\paragraph{An Example of Inefficiency of (Semi-)Public Persuasion}

To clarify better the Bayesian persuasion framework, we provide an example of its application to majority voting without districts. 
This example is also useful to show that the restriction to (semi-)public signaling can decrease the sender's utility by an arbitrarily large factor.

\begin{example}\label{example}
Consider a (non-relaxed) majority-voting election with seven voters $R = \set{r_1,r_2,r_3,r_4,r_5,r_6,r_7}$ and two candidates $C = \set{c_0,c_1}$.
The objective of the sender is to maximize the probability with which candidate $c_0$ is elected. 
Therefore, he needs to persuade at least half of the voters (\emph{i.e.}, $\ceil{|R|/2}=4$) to make candidate $c_0$ be the winner.
There are three states of nature, namely, $\Theta = \set{\theta_A,\theta_B,\theta_C}$, and each state is equally probable.
Tab.~\ref{tab:limit_of_public_utilities} provides the parameters $u_r(\theta)$ of the voters, defined as $u_r(\theta) = u_r(\theta,c_0) - u_r(\theta,c_1)$ and capturing  the net payoff of voter $r$ from having candidate $c_0$ elected, in state of nature $\theta$.
 \setlength{\arrayrulewidth}{0.1mm}
 \setlength{\tabcolsep}{6pt}
 \renewcommand{\arraystretch}{1.2}
 \begin{table}[h]
 	\centering
 	\begin{tabular}{rc||c|c|c}
 		&& {State $\theta_A$}& {State $\theta_B$} & {State $\theta_C$} \\ [0.5ex]
 		\hline
 		\multirow{4}{*}{\rotatebox[origin=c]{90}{Voters}} 
 		&$r_1$,$r_2$ & $+1/2$ & $-1$ & $-1$\\
 		&$r_3$,$r_4$ & $-1$ & $+1/2$ & $-1$\\
 		&$r_5$,$r_6$ & $-1$ & $-1$ & $+1/2$\\
 		&$r_7$ & $+1/2$ & $+1/2$ & $+1/2$
 	\end{tabular}
 	\caption{Payoffs of the voters in Example \ref{example}.}
 	\label{tab:limit_of_public_utilities}
 \end{table}
The sender can design a direct and persuasive private signaling scheme such that at least four voters prefer candidate $c_0$ over $c_1$ for every signal profile $\s$. 
Hence, this scheme ensures that candidate $c_0$ is elected with a probability of 1. 
Specifically, in each state $\theta$ the scheme recommends candidate $c_0$ to every voter $r$ with utility $u_r(\theta) \ge 0$ and to one voter among those with $u_r(\theta) < 0$ chosen randomly with uniform probability.
It is easy to see that this private signaling scheme satisfies the incentive constraints.
Consider, for example, voter $r_1$.
The marginal probabilities with which he is recommended to vote for candidate $c_0$ are: $\phi_1(\theta_A,c_0) = 1, \phi_1(\theta_B,c_0) = 1/4$ and $\phi_1(\theta_C,c_0) = 1/4$.
Therefore, when he receives the recommendation to vote for $c_0$, he has a posterior distribution $\pvec$ with $p_{\theta_A}=\frac{\mu_{\theta_A} \cdot \, \phi_1(\theta_A,c_0)}{\sum\limits_{\theta \in \Theta}\mu_\theta \, \cdot \, \phi_1(\theta,c_0)}=\frac{1/3}{1/3+1/3 \cdot 1/4+1/3 \cdot 1/4}=2/3$ and $p_{\theta_B}=p_{\theta_C}=1/6$. 
Thus, the voter has expected utility  $u(\theta_A)p_{\theta_A} +u(\theta_B)p_{\theta_B}+ u(\theta_C)p_{\theta_C}=0$ and will follow the recommendation. Similarly, we can show that the incentive constraints associated with the other voters are satisfied.  
We switch to public signals and we show that we cannot design a public signaling scheme that guarantees candidate $c_0$ to be elected with positive probability.
Any public signaling scheme making candidate $c_0$ win the election with positive probability must assign a strictly positive probability to at least one signal that makes at least four voters prefer candidate $c_0$ over $c_1$.
We show that we cannot design such a public signal.
In particular, we show that there is no posterior $\pvec\in \P$ that provides an expected utility larger than or equal to zero to at least four voters.\footnote{Recall that in a public signaling scheme, all the receivers observe the same signal, perform the same update of the belief, and have the same posterior belief.}
Since receiver $r_7$ prefers candidate $c_0$ in every state of nature, he votes for $c_0$ independently from the posterior induced by the signal. 
Therefore, it is sufficient to persuade three voters among the first six.
Suppose that voters $r_1$ and $r_2$ vote for $c_0$. This implies that $p_{\theta_A}/2- p_{\theta_B}-p_{\theta_C} =p_{\theta_A}/2- (1-p_{\theta_A})\ge0$ and $p_{\theta_A}\ge2/3$.
Suppose, by contradiction, that also voters $r_3$ and $r_4$ vote for $c_0$.
This requires that $-p_{\theta_A}+ p_{\theta_B}/2-p_{\theta_C} \ge 0 $ and $p_{\theta_B}\ge 2/3$, reaching a contradiction with $\pvec \in \P$.
It is easy to see that, by the symmetry of the instance, all the other sets of four voters cannot vote for $c_0$ at the same time.
\end{example}
From the previous example, we can state the following:
\begin{proposition}\label{prop:inef}
There is an instance of majority-voting election in which the optimal private signaling scheme guarantees that candidate $c_0$ wins the election with a probability of $1$, while the optimal public signaling scheme cannot guarantee a winning probability strictly larger than $0$.
\end{proposition}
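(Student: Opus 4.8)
The plan is to take the seven-voter, three-state instance already displayed in Example~\ref{example} and verify the two halves of the statement separately; almost all the work is bookkeeping, since the proposition is essentially a reformulation of that example.

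For the private side, I would exhibit the scheme described in the example: in each state $\theta$, recommend $c_0$ to every voter $r$ with $u_r(\theta)\ge 0$, and to exactly one voter chosen uniformly at random among those with $u_r(\theta)<0$. First I would check \emph{correctness}: in each of the three states the two ``favorable'' voters together with $r_7$ already account for three recommendations of $c_0$, and the random tie-break adds a fourth, so $c_0$ always collects at least $\ceil{|R|/2}=4$ votes and hence $\W=1$ with probability $1$, whatever the realized state. Then I would check \emph{persuasiveness}. By the symmetry of Table~\ref{tab:limit_of_public_utilities} it suffices to treat $r_1$ (as representative of $r_1,\dots,r_6$) and $r_7$. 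For $r_1$ the marginal recommendation probabilities are $\phi_1(\theta_A,c_0)=1$ and $\phi_1(\theta_B,c_0)=\phi_1(\theta_C,c_0)=1/4$, so the Bayesian update on the recommendation of $c_0$ gives $p_{\theta_A}=2/3$, $p_{\theta_B}=p_{\theta_C}=1/6$ and expected net utility $\tfrac12\cdot\tfrac23-\tfrac16-\tfrac16=0\ge 0$; the constraint attached to a recommendation of $c_1$ is slack because $r_1$'s net payoff is negative in two of the three states. Voter $r_7$ prefers $c_0$ in every state, so his constraint holds trivially. This shows the optimal private scheme attains winning probability $1$.

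For the public side, I would argue by contradiction. If some public signaling scheme made $c_0$ win with positive probability, then at least one signal realized with positive probability would induce a common posterior $\pvec\in\Delta_\Theta$ under which at least $4$ voters best-respond with $c_0$. Since $r_7$ votes $c_0$ under every posterior, such a posterior would have to persuade $3$ of the six voters $r_1,\dots,r_6$. The key observation is that these six voters split into the pairs $\{r_1,r_2\}$, $\{r_3,r_4\}$, $\{r_5,r_6\}$, associated respectively with $\theta_A,\theta_B,\theta_C$, and a voter from the pair associated with $\theta$ best-responds with $c_0$ only if $p_\theta\ge 2/3$ (e.g.\ $r_1$ needs $\tfrac12 p_{\theta_A}-p_{\theta_B}-p_{\theta_C}=\tfrac32 p_{\theta_A}-1\ge 0$, and $r_3$ needs $p_{\theta_B}\ge 2/3$, etc.). Any three of these six voters must, by pigeonhole, touch at least two distinct pairs, forcing $p_\theta\ge 2/3$ for two distinct states $\theta$ — impossible, as the posterior probabilities sum to $1$. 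Hence no public signal can make four voters vote $c_0$, so the optimal public scheme wins with probability $0$.

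I do not expect a genuine obstacle: the single point that needs care is making the public-case impossibility argument exhaustive rather than leaning on ``the remaining cases are symmetric''. The pairs-plus-pigeonhole phrasing above covers all $\binom{6}{3}$ choices of three voters uniformly and replaces the ad hoc case distinction used in the example.
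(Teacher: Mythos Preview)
Your proposal is correct and takes essentially the same approach as the paper, which simply invokes Example~\ref{example} as the proof. Your pigeonhole formulation for the public case is a mild presentational improvement over the paper's appeal to symmetry, but the underlying argument and instance are identical.
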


This inefficiency result can be easily generalized to the case of public and semi-public signaling scheme in district-based elections.
Indeed, with only a single district, semi-public signals correspond to public signals and a district-based election reduces to a simple majority-voting election as the one presented above.

\section{Private Persuasion in District-based Elections} \label{sec:private}

In this section, we show that an optimal private signaling scheme for district-based elections can be found in polynomial time.
Our result is built upon the previous works by \citet{arieli2019private} and \citet{castiglioni2019persuading} on $k$-voting. 
Let $\betadue_{d,\theta}$ be the probability with which $K_d$ voters vote for $c_0$ in district $d$ when the state of nature is $\theta$. 
Similarly, let $\alpha_\theta$ be the probability that $c_0$ wins in at least $K_D$ districts with state of nature $\theta$.
Finally, given a direct private signaling scheme $\phi$, we denote with $\phi_r(\theta,c)=\sum_{\cvec:c_r=c}\phi(\theta,\cvec)$  the marginal probabilities of $\phi$ whereby $c$ is recommended to $r$ with state of nature $\theta$.
We can compute an optimal private signaling scheme by LP~\eqref{lp:private_general} (all the proofs are in the Supplemental Material).
\begin{restatable}{theorem}{theoremOne}
	\label{th: optimal_private}
	LP~\eqref{lp:private_general} computes an optimal solution of \textsf{PRIVATE-DBE} in polynomial time.
\end{restatable}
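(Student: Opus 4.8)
The plan is to prove two matching inequalities: the optimum of \textsf{PRIVATE-DBE} is at most the value of LP~\eqref{lp:private_general}, and conversely every feasible point of the LP can be turned, in polynomial time, into a direct and persuasive private signaling scheme achieving that value. Since the LP has polynomially many variables and constraints and is therefore solvable in polynomial time, this yields the theorem.

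The enabling observation is that, with a private channel per voter and no inter-agent externalities, two simplifications occur. First, the incentive constraints listed in the problem formulation are linear in the marginals $\phi_r(\theta,c)$ and decouple across voters, so a revelation-principle argument lets us work entirely with these marginals. Second, $\W$ is anonymous within each district (it sees $\cvec^d$ only through the number of $c_0$-votes, compared against $K_d$) and across districts (it sees the district outcomes only through their count, compared against $K_D$). Hence the only quantities that matter are the voter marginals $\phi_r(\theta,c_0)$, the district-win probabilities $\betadue_{d,\theta}$, the election-win probabilities $\alpha_\theta$, together with the auxiliary ``joint'' probabilities that voter $r$ is recommended $c_0$ and district $d$ is won (resp.\ lost) and that district $d$ is won and the election is won (resp.\ lost). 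For the upper bound I would show that any feasible scheme induces values of these quantities that are feasible for LP~\eqref{lp:private_general}: the crucial inequalities are that conditioning on winning district $d$ forces at least $K_d$ of its voters to be recommended $c_0$, so the corresponding joint probabilities sum to at least $K_d\,\betadue_{d,\theta}$, while conditioning on losing forces at most $K_d-1$, giving an upper bound of $(K_d-1)(1-\betadue_{d,\theta})$; and the same pair of inequalities one level up with $K_D$ and $\alpha_\theta$. Together with the box and incentive constraints this shows the optimum is at most the LP value.

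For achievability I would reconstruct a scheme hierarchically, one state $\theta$ at a time. Given a feasible LP point: with probability $\alpha_\theta$ declare the election ``won'' and, conditioned on this, sample the set of winning districts from a distribution over subsets of $D$ of size at least $K_D$ whose district-marginals are the rescaled ``won'' auxiliary variables; with the remaining probability sample a set of winning districts of size at most $K_D-1$ from the rescaled ``lost'' variables. Such distributions exist because $\{q: 0\le q\le 1,\ \sum_d q_d\ge K_D\}$ is exactly the convex hull of the indicator vectors of the size-$\ge K_D$ subsets (and symmetrically for the ``$\le K_D-1$'' polytope), and the LP constraints put the rescaled marginals inside these polytopes; Carathéodory then gives a distribution over at most $|D|+1$ subsets. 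Inside each district $d$, conditioned on whether it was declared a winner, sample the set of voters recommended $c_0$ from a distribution over subsets of $R^d$ of size at least $K_d$ (resp.\ at most $K_d-1$) with the appropriately rescaled voter-level auxiliary marginals, using the same polytope fact. A short check shows the scheme is direct, the probability that $r$ is recommended $c_0$ in state $\theta$ equals exactly $\phi_r(\theta,c_0)$ (hence it is persuasive, since the LP carries the incentive constraints), district $d$ is won iff it was declared a winner, the election is won iff at least $K_D$ districts were, and so the sender's utility is $\sum_\theta \mu_\theta\alpha_\theta$, the LP objective.

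Most of the work is routine bookkeeping: verifying the various rescalings lie in $[0,1]$ and satisfy the polytope conditions, which is precisely what the LP constraints encode. The part I expect to be the main obstacle is to establish that this two-level decomposition is lossless — that one can simultaneously realize the optimal threshold probability inside every district and the optimal threshold probability across districts while still honoring each voter's marginal — and to pin down exactly the right linear inequalities (the $K_d\,\betadue_{d,\theta}$ and $(K_d-1)(1-\betadue_{d,\theta})$ bounds and their district-level analogues) so the LP is neither too weak for the upper bound nor too strong for achievability. This is exactly where a private channel per voter is essential, since it permits the arbitrary joint correlation across the whole electorate exploited in the reconstruction; Proposition~\ref{prop:inef} and the inapproximability of the (semi-)public variants show that no such clean characterization can hold there. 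The argument extends the marginal-based characterizations of \citet{arieli2019private} and \citet{castiglioni2019persuading} for single-stage $k$-voting to the two-stage district-based setting.
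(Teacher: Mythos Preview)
Your high-level strategy is sound, and the polytope/joint-probability characterization you sketch is correct: given marginals $x_r\in[0,1]$, a value $\betadue$ is achievable as $\Pr(\text{at least }K\text{ successes})$ iff there exist $w_r\in[0,\betadue]$, $l_r\in[0,1-\betadue]$ with $w_r+l_r=x_r$, $\sum_r w_r\ge K\betadue$, $\sum_r l_r\le (K-1)(1-\betadue)$, and the Carath\'eodory reconstruction you describe works at both levels. This would yield a valid polynomial-size LP for \textsf{PRIVATE-DBE}.

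The gap is that this is \emph{not} LP~\eqref{lp:private_general}. You assert that the LP's auxiliary variables are ``joint probabilities that voter $r$ is recommended $c_0$ and district $d$ is won (resp.\ lost)'' and that the polytope conditions are ``precisely what the LP constraints encode''. They are not. Look at the indexing: $t_{d,\theta,m}$, $z_{d,\theta,r,m}$ and $\qvar_{d,\theta,m}$ carry an extra index $m\in\{0,\ldots,K_d-1\}$ that has no counterpart in your formulation. Constraints~\eqref{eq:lp_pr_beta_ub}--\eqref{eq:lp_pr_phi_lb1} are a linearization of the closed-form expression of \citet{arieli2019private},
\[
\betadue_{d,\theta}\ \le\ \min_{m}\ \frac{1}{K_d-m}\,\bigl(\text{sum of the smallest }|R^d|-m\text{ entries of }\{\phi_r(\theta,c_0)\}_{r\in R^d}\bigr),
\]
where \eqref{eq:lp_pr_q_ub1}--\eqref{eq:lp_pr_phi_lb1} are the \emph{dual} of the tiny LP ``$\min\, x^\top y$ s.t.\ $\mathbf 1^\top y=w,\ 0\le y\le 1$'' that picks out the $w$ smallest entries; and \eqref{eq:lp_pr_alpha_ub}--\eqref{eq:lp_pr_beta_lb} repeat the same construction one level up. So the paper's proof proceeds by invoking the Arieli--Babichenko formula and then LP duality, not by the joint-probability/polytope route.

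Your argument therefore proves that \emph{some} polynomial LP solves \textsf{PRIVATE-DBE}, but not that LP~\eqref{lp:private_general} does. To salvage the proof as a proof of the stated theorem you would need either to show directly that \eqref{eq:lp_pr_beta_ub}--\eqref{eq:lp_pr_phi_lb1} are exactly the Arieli--Babichenko bound (this is what the paper does), or to prove that your linearization and the paper's define the same feasible set in $(\phi,\betadue,\alpha)$---which is true but is an additional, non-trivial step you have not supplied.
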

\begin{proof}[Proof sketch]
	Constraints \eqref{eq:lp_pr_persuasive_1} force the marginal probabilities $\phi_r(\theta,c_0)$ of the signaling scheme $\phi$ to satisfy the incentive constraints.
	For each state of nature $\theta$, the maximum probability $\betadue_{d,\theta}$ with which at least $K_d$ receivers in $R^d$ vote for $c_0$ given marginal probabilities $\phi_r(\theta,c_0)$ is:
	\[
	\betadue_{d,\theta}=\min\left\{\min_{m\in\{0,\dots,K_d-1\}} \frac{1}{K_d-m}\qvar_{\theta,m}; \, 1 \, \right\},
	\]
	where $\qvar_{\theta,m}$ is the sum of the lowest $|R^d|-m$ elements in the set $\{\phi_r(\theta,c)\}_{r\in R^d}$; for further details, see \citet{arieli2019private}.
	The above equation is enforced via Constraints \eqref{eq:lp_pr_beta_ub}.
	Constraints \eqref{eq:lp_pr_q_ub1} and \eqref{eq:lp_pr_phi_lb1} ensure that the values of $\qvar_{\theta,m}$ are consistent with the values of the other variables.
	The computation of the maximum probability $\alpha_\theta$ with which at least $K_D$ districts elect $c_0$ given probabilities $\betadue_{d,\theta}$ is similar to the computation of $\betadue_{d,\theta}$ given $\phi_r(\theta,c_0)$.
	This is enforced by Constraints \eqref{eq:lp_pr_alpha_ub}, \eqref{eq:lp_pr_p_ub}, and \eqref{eq:lp_pr_beta_lb}.
	Finally, Objective~\eqref{eq:lp_pr_obj} maximizes the sum over all $\theta\in\Theta$ of the prior probability multiplied by $\alpha_\theta$, \emph{i.e.}, the probability that that $c_0$ wins when the state of nature is $\theta$.
\end{proof}

\begin{subequations}\label{lp:private_general}
	\begin{align}
	&
	\max_{\substack{\alpha\in [0,1]^{|\Theta|},\, \betadue\in [0,1]^{|D|\times|\Theta|}\\
			\pvar,\hvar \in\mathbb{R}^{|\Theta|\times K_D},\, \gvar \in\mathbb{R}^{|D| \times |\Theta| \times K_D }\\
			t_{d,\theta,m},\,\qvar_{d,\theta,m}\in\mathbb{R} \ \forall d\in D,\theta\in\Theta, m\in\{1,\dots,K_d\} \\ 
			 z_{d,\theta,r,m}\in\mathbb{R} \ \forall d\in D,\theta\in\Theta, r \in R, m\in\{1,\dots,K_d\} \\
			 \phi_r(\cdot,c_0)\in[0,1]^{|\Theta|} \ \forall r \in R\\ }}  \sum_{\theta\in\Theta}\mu_\theta\,\alpha_\theta  \label{eq:lp_pr_obj}\\ 	&\textnormal{s.t.}\sum_{\theta\in\Theta}\mu_\theta\,\phi_r(\theta,c_o)\,u_r(\theta) \geq 0 \label{eq:lp_pr_persuasive_1} \hspace{2cm}\forall r\in R\\
			%&\hspace{0.5cm} \sum_{\theta\in\Theta}\mu_\theta\,(\,1 - \varphi_r(\theta)\,)\,u_r(\theta) \leq 0 \label{eq:lp_pr_persuasive_2}\hspace{4.5cm}\forall r\in R\\
	& \hspace{0.5cm}\alpha_\theta\leq\frac{1}{K_D - m} \pvar_{\theta,m} \label{eq:lp_pr_alpha_ub}\\ &\hspace{2.8cm}\forall\theta\in\Theta, \forall m\in\{0,\dots,K_D-1\}\nonumber\\
	& \hspace{0.5cm}\pvar_{\theta,m}\leq (|D|-m)\hvar_{\theta,m}+\sum_{d\in D}\gvar_{d,\theta,m}\label{eq:lp_pr_p_ub}\\
	& \hspace{2.8cm}\forall \theta\in\Theta,\forall m \in \{0,\dots,K_D-1\}\nonumber\\
	& \hspace{0.5cm}\betadue_{d,\theta}\geq \hvar_{\theta,m} + \gvar_{d,\theta,m} \label{eq:lp_pr_beta_lb}\\
	&\hspace{1.55cm}\forall d\in D,\forall \theta\in\Theta, \forall m\in \{0,\dots,K_D - 1\} \nonumber\\
	& \hspace{0.5cm}\betadue_{d,\theta}\leq\frac{1}{K_d-m} \qvar_{d,\theta,m}\label{eq:lp_pr_beta_ub}\\
	& \hspace{1.65cm}\forall d\in D, \forall\theta\in\Theta, \forall m\in\{0,\dots,K_d-1\}\nonumber\\
	& \hspace{0.5cm}\qvar_{d,\theta,m}\leq (|R^d| -m)t_{d,\theta,m}+\sum_{r\in R^d}z_{d,\theta,r,m}\label{eq:lp_pr_q_ub1}\\
	& \hspace{1.65cm}\forall d\in D,\forall \theta\in\Theta,\forall m \in \{0,\dots,K_d-1\}\nonumber\\
	& \hspace{0.5cm}\phi_r(\theta,c_0)\geq t_{d,\theta,m} + z_{d,\theta,r,m} \label{eq:lp_pr_phi_lb1}\\
	& \hspace{1cm}\forall d \in D, \forall r\in R^d,\forall \theta\in\Theta, \forall m\in \{0,\dots,K_d-1\} \nonumber	
	\end{align}
\end{subequations}

\section{Public and Semi-public Persuasion in District-based Elections}

We turn our attention to the design of optimal public and semi-public signaling schemes. 
There is a sharp distinction between the nature of these problems and that one of private signaling. 
Indeed, in addition to being inefficient w.r.t.~private signals (see Proposition \ref{prop:inef}), optimal \mbox{(semi-)}public signaling schemes are also inapproximable. 
The hardness follows from previous results with public signaling.
Specifically, \citet{castiglioni2019persuading} prove that it is $\mathsf{NP}$-hard to approximate the optimal public signaling scheme within any factor in elections with majority voting.
The extension of this hardness result to public and semi-public signaling in district-base elections is direct as a district-based election reduces to majority-voting when there is only a single district.
Thus, we focus on possible relaxations that make the problem computationally tractable.
Motivated by the fact that voters are somewhat biased to follow the sender's recommendations, several works relax the incentive constraints allowing the receivers to vote for the target candidate even if other candidates give them a slightly better expected utility ($\eps$-persuasiveness).
Recently, \citet{castiglioni2020} prove that even allowing this relaxation the problem of designing an approximate public signaling scheme remains intractable with majority voting. 
Therefore, we focus on other different relaxations.
In particular, \citet{cheng2015mixture} employ two forms of relaxation, adopting $\epsilon$-persuasiveness and lowering the number of votes needed to win the election by an arbitrary constant factor. 
With these two relaxations, they prove that an approximate public signaling scheme with majority-voting can be computed efficiently.
We prove that, adapting these two relaxations to our settings, both \textsf{PUBLIC-DBE} and \textsf{SEMIPUBLIC-DBE} admit a multi-criteria PTAS.
As a preliminary step, we prove some results on the relation between the notion of stability and the design of approximately optimal signaling schemes that are of general interest in Bayesian persuasion beyond elections.

\paragraph{Comparative Stability and Public Signaling Schemes}

We refer to the notion of stability of a function introduced by~\citet{xu2020tractability}.
In particular, a function is said stable if, for every action profile, the introduction of small perturbations leads to small changes in the value of the function.
Here, we extend the notion of stability to pairs of functions, and we call it comparative. 
Our extension is such that comparative stability corresponds to (simple) stability in the degenerate case in which the two functions of the pair are the same. 
Furthermore, if function $g$ satisfies the comparative stability property w.r.t.~function $h$, we also say that $g$ is $\beta$-stable compared with $h$.
Initially, we introduce the notion of perturbation by the concept of $\alpha$-noisy distribution.
\begin{definition} 
	Let $\cvec \in \C$ be an action profile and $\yvec$ be a probability distribution supported on $\Delta_\C$.
	For any $\alpha \in (0,1]$, we say that $\yvec$ is an $\alpha$-noisy distribution around $\cvec$ if for all $i \in \set{1, \ldots, n} : \Pr_{\tilde \yvec \sim \yvec}[\tilde y_i \neq c_i] \leq \alpha$.
\end{definition}
Hence, an $\alpha$-noisy distribution bounds the \emph{marginal probability} of any single element of $\set{1,\ldots, n}$ to be corrupted.
However, no assumption is made on how the corruptions of the elements correlate with each other.
Now, we define our notion of comparative stability. 
\begin{definition}
\label{def: Comparative Stability}
Given two functions $g,h : \C \to [0,1]$ and a real number $\beta \geq 0$, we say that $g$ is $\beta$-stable compared with $h$ if and only if the following holds for all action profiles $\cvec \in \C$, $\alpha \in (0,1]$, and $\alpha$-noisy distributions $y$ around $\cvec$:
\[
\underset{\tilde{\yvec} \sim \yvec}\EX\left[ g(\boldsymbol{\tilde y})\right]\geq h(\cvec) (1 - \alpha \beta).
\]
\end{definition}
Intuitively, if $g$ satisfies the comparative stability property w.r.t.~$h$, then, for every action profile, the value of $h$ in that action profile is close to the value of $g$ in the corresponding perturbed action profile.
We exploit the notion of comparative stability to design an efficient algorithm that computes approximate public signaling schemes.
More precisely, we study a generic multi-agent Bayesian persuasion problem, where the sender faces a set of receivers $R$, and each receiver needs to choose an action between a couple of alternatives.
Let $g, h$ be two sets of arbitrary functions depending on the state of nature $\theta$ and denoted with $g_\theta: \C \rightarrow [0,1]$ and $h_\theta: \C \rightarrow [0,1]$, respectively.
According to Definition~\ref{def: Comparative Stability}, we say that $g$ is $\beta$-stable compared with $h$ if $g_\theta$ is $\beta$-stable with respect to $h_\theta$ for all the states of nature $ \theta \in \Theta$.
For the sake of clarity, in the following, we use indirect signaling schemes, and we express a signaling scheme as a weighted set of posteriors to which the receivers respond at best. 
Now, we describe the optimal behavior of the receivers.
\begin{definition}[Receivers' behavior with persuasiveness]\label{def:br_set}
	Given a set of functions $\{f_\theta\}_{\theta\in\Theta}$ such that $f_\theta : \C \to [0,1]$, the receivers' optimal behavior $\bvec^\pvec  \in \C$ with persuasiveness given posterior $\pvec\in \P$ is  as follows. 
	Let:
	\begin{itemize}
		\item A = $\left\{ r \in R : \sum_\theta p_\theta \, u_r(\theta)>0\right\}$ the set of receivers whose  unique best response is action $c_0$,
		\item B = $\left\{ r \in R : \sum_\theta p_\theta \, u_r(\theta)<0\right\}$ the set of receivers whose  unique best response is action $c_1$,
		\item E = $\left\{ r \in R : \sum_\theta p_\theta \, u_r(\theta)=0\right\}$ the set of receivers who are indifferent between action $c_0$ and $c_1$.
	\end{itemize}
	Then, we have: 
	\[
	\bvec^\pvec \, = \arg\max_{\cvec \in \C : c_r = c_0\forall r \in A, \, c_r = c_1  \forall r \in B}\sum_{\theta} p_\theta \, f_\theta(\cvec).
	\]
\end{definition}

Similarly, we define the notion of $\epsilon$-best response. 
\begin{definition}[Receivers' behavior with $\eps$-persuasiveness]\label{def:eps_br_set}
	Given a set of functions $\{f_\theta\}_{\theta\in\Theta}$ such that \mbox{$f_\theta : \C \to [0,1]$,} the receivers' optimal behavior $\bvec^{\pvec,\eps}  \in \C$ with $\eps$-persuasiveness  given posterior $\pvec\in \P$ is  as follows. 
	Let:
	\begin{itemize}
		\item $A_\eps = \left\{ r \in R : \sum_\theta p_\theta \, u_r(\theta)>\eps\right\}$ the set of receivers whose  unique best response is action $c_0$,
		\item $B_\eps = \left\{ r \in R : \sum_\theta p_\theta \, u_r(\theta)<- \eps\right\}$ the set of receivers whose  unique best response is action $c_1$,
		\item $E_\eps = \left\{ r  \in R : \sum_\theta p_\theta \, u_r(\theta)\in[-\eps,\eps]\right\}$ the set of receivers who are indifferent between action $c_0$ and $c_1$.
	\end{itemize}
	Then, we have:
	 \[
	 \bvec^{\pvec,\epsilon}  = \arg\max_{\cvec \in \C: c_r = c_0\forall r \in A_\epsilon \, c_r = c_1  \forall r \in B_\epsilon}\sum_{\theta}p_\theta \, f_\theta(\cvec).
	 \]
\end{definition}

Now, we show that computing a direct public signaling scheme is equivalent to derive a Bayes plausible distribution of posteriors $\gammavec \in \Delta_\P$ that maximizes the sender's utility.
Let $supp(\gammavec)$ denote the set of posteriors induced with strictly positive probability.
Similarly, let $supp(\phi)$ denote the set of posteriors induced by $\phi$ with strictly positive probability.
Finding a public signaling scheme is equivalent to finding a probability distribution $\gammavec \in \Delta_\P$ on the set of posteriors $\P$ such that $\sum_{p \in supp(\gammavec)} \gamma_p \, p_\theta=\mu_\theta$ for every $\theta\in \Theta$.
Given a well-defined distribution over posteriors $\gamma$, we can recover a direct signaling schemes $\phi$ that induces such a probability distribution by setting $\phi_\theta(\textbf{c})=\sum_{\pvec \in supp(\gammavec): \cvec=\bvec^\pvec} \gamma_p \,p_\theta$.
For this reason, in the following, we represent signaling schemes as probability distributions on the posteriors.
We introduce some further notation.
For every $\pvec \in \P$ and set of functions $f=\{f_\theta\}_{\theta \in \Theta}$, we define the sender's expected utility with persuasiveness as $f(\pvec) = \sum_{\theta}\pvec_\theta  \,f_\theta(\bvec^\pvec)$, and with $\eps$-persuasiveness as $f_\eps(\pvec) = \sum_{\theta}\pvec_\theta  \,f_\theta(\bvec^{\pvec,\eps})$.
Finally, we define $q$-uniform probability distributions as follows.
\begin{definition}\label{def:s_distribution}
A probability distribution $\mathbf{x} \in\Delta_X$ is \em $q$-uniform if and only if it is the average of a multiset of $q$ basis vectors in $|X|$-dimensional space.
\end{definition}
Therefore, we say that a probability distribution $\pvec \in \P$ is $q$-uniform if each of its entry $p_\theta$ is a multiple of $1/q$. 
Moreover, we use the notation $\Q \subset \Delta_\Theta$ to denote the set of all $q$-uniform distributions over $\Theta$.
Our first result shows that we can decompose each posterior in a convex combination $\gammavec \in \Delta_\Q$ of $q$-uniform posteriors (with $q$ constant), such that 
 $\sum_{\pvec \in \Q} \gamma_p\,  g_\eps(\pvec)$ closely approximates $h(\pvec^\ast)$.
This is a generalization of the result by \citet{xu2020tractability} to state-dependent utility functions (and couples of functions), and it is crucial to prove the following results.
\begin{restatable}{lemma}{lemmaOne}\label{th: general bound}
Let $\beta, \eps > 0,  \eta \in (0,1]$ and set $q= 32\log\left(\frac{4}{\eta\min\{1;\,1/\beta\}}\right)/\eps^2$.
Then, given a posterior $\pvec^\ast \in \P$ and two sets of functions $g,h$ with $g$ $\beta$-stable compared with $h$, there exists a $\gammavec \in \Delta_\Q$ with $\sum_{p \in \Q} \gamma_\pvec \,\pvec=\pvec^*$ and
\begin{equation}\label{eq:general_bound}
	\sum_{p \in \Q} \gamma_\pvec \sum_\theta p_\theta \, g_\theta(\bvec^{\pvec,\eps})\ge (1-\eta)\sum_\theta p^\ast_\theta \, h_\theta(\bvec^{\pvec^\ast}).
\end{equation}
\end{restatable}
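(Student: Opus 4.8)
The plan is to obtain the $q$-uniform decomposition by a randomized rounding / sampling argument in the spirit of \citet{xu2020tractability}, but carried out carefully enough to track the two separate functions $g$ and $h$ and the $\eps$-persuasiveness slack. First I would fix the target posterior $\pvec^\ast$ and consider its optimal-response profile $\bvec^{\pvec^\ast}$, which by Definition~\ref{def:br_set} assigns $c_0$ to every receiver in $A$ (strictly prefers $c_0$ at $\pvec^\ast$) and $c_1$ to every receiver in $B$. I would then draw $q$ i.i.d.\ samples $\theta_1,\dots,\theta_q$ from $\pvec^\ast$ and form the empirical posterior $\pvec$, which is by construction $q$-uniform and in expectation equals $\pvec^\ast$; letting $\gammavec$ be the induced distribution on $\Q$ immediately gives Bayes-plausibility $\sum_{\pvec\in\Q}\gamma_\pvec\,\pvec=\pvec^\ast$. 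The core of the argument is then to show that, with the stated choice of $q$, the random posterior $\pvec$ typically reproduces the relevant structure of $\pvec^\ast$.

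The key steps, in order, would be: (i) a Hoeffding/Chernoff bound showing that for each receiver $r$, the empirical inner product $\sum_\theta p_\theta u_r(\theta)$ is within $\eps$ of the true value $\sum_\theta p^\ast_\theta u_r(\theta)$ with probability at least $1-\exp(-\eps^2 q/2)$ (here the range $[-1,1]$ of $u_r(\theta)$ enters); union-bounding is \emph{not} needed globally, but crucially, whenever this event holds for a particular $r\in A$, that receiver lies in $A_\eps$ for the empirical posterior, so in $\bvec^{\pvec,\eps}$ he is forced to play $c_0=b^{\pvec^\ast}_r$; symmetrically for $r\in B$. (ii) From (i), the probability that a fixed receiver $r$ with $b^{\pvec^\ast}_r=c_0$ (resp.\ $c_1$) is \emph{not} pinned to the same action under $\eps$-persuasiveness is at most $\exp(-\eps^2 q/2)$; hence the random profile $\bvec^{\pvec,\eps}$, viewed as a distribution, is an $\alpha$-noisy distribution around $\bvec^{\pvec^\ast}$ with $\alpha=\exp(-\eps^2 q/2)$ — this is exactly the hypothesis needed to invoke comparative stability. (iii) Apply Definition~\ref{def: Comparative Stability} with this $\alpha$-noisy distribution: for each fixed $\theta$, $\EX[g_\theta(\bvec^{\pvec,\eps})]\ge h_\theta(\bvec^{\pvec^\ast})(1-\alpha\beta)$, where the expectation is over the randomness in $\pvec$ but $\theta$ is held fixed. (iv) Combine across $\theta$: taking expectation of $\sum_\theta p^\ast_\theta g_\theta(\bvec^{\pvec,\eps})$ — note $\pvec^\ast$ is the fixed prior here, while the profile is random — and using step (iii) termwise gives $\EX\big[\sum_\theta p^\ast_\theta g_\theta(\bvec^{\pvec,\eps})\big]\ge (1-\alpha\beta)\sum_\theta p^\ast_\theta h_\theta(\bvec^{\pvec^\ast})$. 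Finally, by an averaging/derandomization argument there exists a specific multiset (hence a specific $\gammavec\in\Delta_\Q$) achieving at least this expectation, and plugging in $q=32\log(4/(\eta\min\{1;1/\beta\}))/\eps^2$ makes $\alpha\beta\le\eta$, yielding \eqref{eq:general_bound}.

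One technical subtlety I would be careful about: in \eqref{eq:general_bound} the left side weights $g_\theta(\bvec^{\pvec,\eps})$ by the \emph{empirical} $p_\theta$, not by $p^\ast_\theta$, so step (iv) as sketched is not quite the final inequality. To bridge this I would either (a) observe that $\EX_{\pvec}\big[\sum_\theta p_\theta\, g_\theta(\bvec^{\pvec,\eps})\big]$ can be rewritten, by conditioning on the multiset of sampled states, as a quantity comparable to the $p^\ast$-weighted version — using that each $\theta$ contributes with empirical frequency concentrated around $p^\ast_\theta$ — or (b) more cleanly, note that $\sum_\theta p_\theta g_\theta(\bvec^{\pvec,\eps}) = \frac1q\sum_{j=1}^q g_{\theta_j}(\bvec^{\pvec,\eps})$ and take expectation over the sample, exploiting that the $j$-th draw $\theta_j$ has marginal law $\pvec^\ast$ while $\bvec^{\pvec,\eps}$ depends on the whole sample; a short coupling/symmetrization decouples the "index" $\theta_j$ from the profile up to the same $\alpha$-noise bound. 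I expect this decoupling — reconciling the per-coordinate stability guarantee (which is stated for a fixed $\theta$) with the empirically-weighted objective — to be the main obstacle, and the rest to be routine concentration plus the derandomization step.
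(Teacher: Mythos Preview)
Your overall strategy matches the paper's: let $\gammavec$ be the law of the empirical posterior formed from $q$ i.i.d.\ draws from $\pvec^\ast$, then combine concentration with comparative stability. (No derandomization is needed, incidentally---the statement asks for a \emph{distribution} $\gammavec\in\Delta_\Q$, and the sampling law already is one; the expectation you compute \emph{is} the left side of~\eqref{eq:general_bound}.) But step~(ii) has a real gap. From $r\in A$ (i.e.\ $\sum_\theta p^\ast_\theta u_r(\theta)>0$, possibly by an arbitrarily small margin) together with the concentration event you only get $\sum_\theta p_\theta u_r(\theta)>-\eps$, hence $r\in A_\eps\cup E_\eps$, \emph{not} $r\in A_\eps$. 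For $r\in E_\eps$ the coordinate of $\bvec^{\pvec,\eps}$ is whichever action maximizes $\sum_\theta p_\theta\,g_\theta(\cdot)$ and need not equal $b^{\pvec^\ast}_r$. So the random profile $\bvec^{\pvec,\eps}$ is \emph{not} $\alpha$-noisy around $\bvec^{\pvec^\ast}$ in general, and step~(iii) cannot be applied to it. The paper repairs this by introducing an auxiliary $\eps$-feasible profile $\yvec^\pvec$ that is \emph{forced} to agree with $\bvec^{\pvec^\ast}$ whenever $r\in E_\eps$, shows that the law of $\yvec^\pvec$ is $\alpha/2$-noisy, and then uses optimality of $\bvec^{\pvec,\eps}$ to obtain $\sum_\theta p_\theta\,g_\theta(\bvec^{\pvec,\eps})\ge \sum_\theta p_\theta\,g_\theta(\yvec^\pvec)$.

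Crucially, that optimality inequality is only available in the \emph{empirical} weights $p_\theta$, which is why your plan of first proving the bound in $p^\ast$-weights and ``bridging'' at the end does not go through once the auxiliary profile is in place: the only passage from $\yvec^\pvec$ back to $\bvec^{\pvec,\eps}$ is the $p_\theta$-weighted comparison, so the empirical-weight issue cannot be deferred. The paper therefore handles both difficulties jointly by conditioning on the empirical count $p_\theta=i/q$ and proving that (a) the \emph{conditional} law of $\yvec^\pvec$ is still $\alpha/2$-noisy around $\bvec^{\pvec^\ast}$ (so stability can be applied per~$\theta$, conditionally), and (b) the values $i/q$ far from $p^\ast_\theta$ carry at most $\tfrac{\alpha}{2}\,p^\ast_\theta$ total mass. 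Your fix~(b) points in the right direction---conditioning on one draw $\theta_j=\theta$ is morally the same as conditioning on the count---but to make it work you would first need to introduce the auxiliary $\yvec^\pvec$ and then establish the conditional per-receiver noise bound; those are precisely the paper's sub-lemmas, and neither ingredient is present in your proposal.
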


Now, we can prove the main result of this section.
Consider a couple of sets of functions $g,h$ where $g$ is $\beta$-stable compared with $h$. 
With abuse of notation, we define $g(\phi)$ and $h(\phi)$ as the functions which evaluate the expected sender's utility of a public signaling scheme $\phi$ with $h$ and $g$, respectively.
We can resort to Lemma~\ref{th: general bound} to state the following result.
The proof is based on solving a linear program that works only with $q$-uniform posteriors.
\begin{restatable}{theorem}{theoremTwo}\label{th: bi-criteria general}
Let $\beta,\eps > 0$ and $\eta \in (0,1]$. Consider two arbitrary state-dependent sets of functions $g, h$ such that $g_\theta: \C \to [0,1]$ is $\beta$-stable compared with $h_\theta:\C \to[0,1]$ for all $\theta \in \Theta$.
Then there exists a $poly\left(|R| \; |\Theta|^{\log( \frac{1}{\eta \, \min\{1; 1/\beta\}} )/\eps^2} \right)$ time algorithm that returns an $\eps$-persuasive public signaling scheme $\phi_\eps$ such that:
\begin{equation}
g(\phi_\eps) \ge (1 - \eta) \max_{\phi \in \Phi} h(\phi), \nonumber    
\end{equation}
where $\Phi$ is the set of persuasive signaling schemes.
\end{restatable}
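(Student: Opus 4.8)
The plan is to convert the existence statement of Lemma~\ref{th: general bound} into an algorithm via a linear program over $q$-uniform posteriors. First I would fix the parameters as in Lemma~\ref{th: general bound}, setting $q = 32\log(4/(\eta'\min\{1;1/\beta\}))/\eps^2$ for a slightly rescaled accuracy $\eta'$ (splitting the loss $\eta$ into the decomposition error and the LP-solving error), and let $\Q$ be the set of all $q$-uniform distributions over $\Theta$; note $|\Q| \le |\Theta|^q = |\Theta|^{O(\log(1/(\eta\min\{1;1/\beta\}))/\eps^2)}$, which is polynomial in the stated budget. For each $\pvec \in \Q$ one computes in polynomial time the $\eps$-best-response profile $\bvec^{\pvec,\eps}$ (following Definition~\ref{def:eps_br_set}: the sets $A_\eps, B_\eps, E_\eps$ are determined by signs of $\sum_\theta p_\theta u_r(\theta)$, and the maximization of $\sum_\theta p_\theta g_\theta(\cvec)$ over the remaining free coordinates is over a combinatorially simple set — in the election applications $g_\theta$ is monotone enough that this reduces to a counting/threshold computation, but in general one may take it as given that $\bvec^{\pvec,\eps}$ is computable since that is the receivers' prescribed behavior). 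Then the objective value $g_\eps(\pvec) = \sum_\theta p_\theta g_\theta(\bvec^{\pvec,\eps})$ is a constant for each $\pvec \in \Q$.

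Next I would set up the LP with variables $\gamma_\pvec \ge 0$ for $\pvec \in \Q$: maximize $\sum_{\pvec \in \Q} \gamma_\pvec \, g_\eps(\pvec)$ subject to the Bayes-plausibility constraints $\sum_{\pvec \in \Q} \gamma_\pvec \, p_\theta = \mu_\theta$ for every $\theta \in \Theta$ and $\sum_{\pvec \in \Q} \gamma_\pvec = 1$. This LP has $|\Q|$ variables and $|\Theta|+1$ constraints, so it is solvable in time polynomial in $|\Q|$ and $|\Theta|$, hence within the claimed budget. Any feasible $\gammavec$ yields, via $\phi_\theta(\cvec) = \sum_{\pvec \in supp(\gammavec):\, \cvec = \bvec^{\pvec,\eps}} \gamma_\pvec \, p_\theta$, an $\eps$-persuasive public signaling scheme $\phi_\eps$ whose expected utility under $g$ equals exactly the LP objective — this is because each induced posterior is one of the $\pvec \in \Q$, each receiver $\eps$-best-responds by construction of $\bvec^{\pvec,\eps}$ (persuasiveness is violated by at most $\eps$ precisely because $r \in E_\eps$ means $|\sum_\theta p_\theta u_r(\theta)| \le \eps$), and $g(\phi_\eps) = \sum_{\pvec} \gamma_\pvec \sum_\theta p_\theta g_\theta(\bvec^{\pvec,\eps})$.

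For the approximation guarantee I would argue as follows. Let $\phi^\ast \in \Phi$ be an optimal persuasive signaling scheme for $h$, so $h(\phi^\ast) = \max_{\phi \in \Phi} h(\phi)$, and write $\phi^\ast$ as a distribution over its induced posteriors $\{\pvec^\ast\}$. To each such $\pvec^\ast$ apply Lemma~\ref{th: general bound} with accuracy $\eta'$: this produces a $\gammavec^{\pvec^\ast} \in \Delta_\Q$ with $\sum_{\pvec \in \Q}\gamma^{\pvec^\ast}_\pvec \, \pvec = \pvec^\ast$ and $\sum_{\pvec \in \Q}\gamma^{\pvec^\ast}_\pvec \sum_\theta p_\theta g_\theta(\bvec^{\pvec,\eps}) \ge (1-\eta')\sum_\theta p^\ast_\theta h_\theta(\bvec^{\pvec^\ast})$. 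Mixing these $\gammavec^{\pvec^\ast}$ with the same weights that $\phi^\ast$ assigns to the $\pvec^\ast$ gives a single $\gammavec \in \Delta_\Q$ that is Bayes-plausible (the marginal-matching is preserved by the averaging) and feasible for the LP, with objective value at least $(1-\eta') h(\phi^\ast)$. Hence the LP optimum, and therefore $g(\phi_\eps)$, is at least $(1-\eta')\max_{\phi \in \Phi} h(\phi)$; choosing $\eta'$ appropriately (e.g.\ $\eta' = \eta$, since Lemma~\ref{th: general bound} already absorbs the whole $(1-\eta)$ factor and the LP is solved exactly) yields the claim.

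The main obstacle I expect is twofold and both parts are bookkeeping rather than conceptual: (i) verifying that Lemma~\ref{th: general bound}, which is stated for a single target posterior $\pvec^\ast$, lifts correctly to a convex combination of posteriors while preserving Bayes-plausibility — this needs the observation that a convex combination of Bayes-plausible decompositions is again Bayes-plausible, together with linearity of expectation for the objective bound; and (ii) confirming that the resulting $\phi_\eps$ is genuinely $\eps$-persuasive as a \emph{public} scheme, i.e.\ that the per-receiver incentive constraints in the public form (the second bullet in the incentive-constraints list) are violated by at most $\eps$ — this follows because each receiver, at each induced posterior $\pvec$, is recommended an action that is an $\eps$-best response by definition of $\bvec^{\pvec,\eps}$, so the recommended action's expected utility is within $\eps$ of the best alternative. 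Everything else (size of $\Q$, LP running time, recovering $\phi$ from $\gammavec$) is routine.
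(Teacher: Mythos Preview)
Your proposal is correct and follows essentially the same approach as the paper: invoke Lemma~\ref{th: general bound} to decompose each posterior of an optimal persuasive scheme into $q$-uniform posteriors, use this to certify that the LP over $\Delta_\Q$ with Bayes-plausibility constraints has optimum at least $(1-\eta)\max_{\phi\in\Phi} h(\phi)$, and recover $\phi_\eps$ from the LP solution exactly as you describe. Your discussion is in fact more explicit than the paper's on the two bookkeeping points you flag (lifting Lemma~\ref{th: general bound} to mixtures and verifying $\eps$-persuasiveness), but these are handled the same way.
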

By setting $h=g$, we obtain a generalization of the result by~\citet{xu2020tractability} to state-dependent functions.

\paragraph{Comparative Stability of Voting Functions}

We apply this novel concept of stability to voting problems.
Our first result proves that the two relaxed majority-voting functions previously introduced satisfy the comparative stability property.
This result is similar to that by \citet{cheng2015mixture}. 
However, we use multiplicative factors (in place of additive factors) and prove a slightly stronger result than stability. In particular, we prove that the decrease in utility is small even if only the perturbations from action $c_0$ to $c_1$ are bounded.
\begin{restatable}{lemma}{lemmaTwo}
\label{lemma: stability of W_delta wrt W}
$W_\delta$ is $1/\delta$-stable compared with $W$. Moreover, for all $\textbf{c} \in \C$, $r \in R$, $\alpha \in (0,1]$, and $\yvec \in \Delta_\C$ such that $\Pr_{\yvec}\left(\,\tilde y_r = c_1 \land c_r = c_0\,\right) \leq \alpha$, it holds: 
\begin{equation}
\EX_{\tilde \yvec \sim{} \yvec}\left[ W_\delta(\boldsymbol{\tilde{y}})\right] \geq W(\boldsymbol{c}) \, \left(1-\frac{\alpha}{\delta}\right). 	\nonumber
\end{equation}
\end{restatable}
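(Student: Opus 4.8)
The plan is to work directly with the definition of $W_\delta$, which returns $c_0$ precisely when at least $\ceil{(1-\delta)K}$ voters vote for $c_0$ (here $K = \ceil{|R|/2}$), so the statement is only interesting when $W(\cvec) = 1$, i.e.\ when at least $K$ entries of $\cvec$ already equal $c_0$. Fix such a $\cvec$ and let $P = \{ r \in R : c_r = c_0 \}$, so $|P| \geq K$. First I would observe that, since $\EX_{\tilde\yvec \sim \yvec}[W_\delta(\tilde\yvec)] = \Pr_{\tilde\yvec\sim\yvec}[W_\delta(\tilde\yvec) = c_0]$, it suffices to lower-bound the probability that at least $\ceil{(1-\delta)K}$ voters vote for $c_0$ in the perturbed profile. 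The only way the count of $c_0$-votes can drop below $K$ is if some voters in $P$ flip from $c_0$ to $c_1$; voters outside $P$ flipping can only help. So I would bound the bad event $\{\text{fewer than }\ceil{(1-\delta)K}\text{ voters vote }c_0\}$ by the event that strictly more than $K - \ceil{(1-\delta)K} \geq \delta K - 1$, hence at least $\ceil{\delta K}$, voters in $P$ flip away from $c_0$.

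The key step is then a union-bound / Markov-type estimate on the number of flips inside $P$. Let $X_r$ be the indicator that $\tilde y_r = c_1$ for $r \in P$; by the hypothesis $\Pr_\yvec(\tilde y_r = c_1 \land c_r = c_0) \le \alpha$ we have $\EX[X_r] \le \alpha$ for each $r \in P$, so $\EX[\sum_{r\in P} X_r] \le \alpha |P|$. Now I must be slightly careful: $|P|$ could be larger than $K$, which would make $\alpha|P|$ too big. The clean move is to restrict attention to an arbitrary fixed subset $P' \subseteq P$ with $|P'| = K$ — if at least $K$ total voters vote $c_0$ we don't need every one of them, but more to the point, the complement count works as follows. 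Actually the tight way: the number of $c_0$-votes in $\tilde\yvec$ is at least $|P| - \sum_{r\in P}X_r \ge K - \sum_{r\in P}X_r$. This is below $\ceil{(1-\delta)K}$ only when $\sum_{r\in P} X_r > K - \ceil{(1-\delta)K}$. Since $K - \ceil{(1-\delta)K} \le K - (1-\delta)K = \delta K$, this forces $\sum_{r\in P}X_r > \alpha^{-1}\cdot \alpha\delta K \ge \delta K / \alpha \cdot \alpha$... let me instead just apply Markov to $\sum_{r\in P'}X_r$ over a size-$K$ subset $P'\subseteq P$: if even the voters in $P'$ mostly hold, we win. The failure event implies $\sum_{r\in P'} X_r \ge \delta K$ (roughly), and $\Pr[\sum_{r\in P'}X_r \ge \delta K] \le \EX[\sum_{r\in P'}X_r]/(\delta K) \le \alpha K/(\delta K) = \alpha/\delta$ by Markov's inequality. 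Hence $\Pr[W_\delta(\tilde\yvec)=c_0] \ge 1 - \alpha/\delta = W(\cvec)(1-\alpha/\delta)$, which is exactly the claimed bound. The first sentence of the lemma, that $W_\delta$ is $1/\delta$-stable compared with $W$, then follows immediately: an $\alpha$-noisy distribution around $\cvec$ in particular satisfies $\Pr_\yvec(\tilde y_r \neq c_r) \le \alpha$, so a fortiori $\Pr_\yvec(\tilde y_r = c_1 \land c_r = c_0) \le \alpha$, and Definition~\ref{def: Comparative Stability} is met with $\beta = 1/\delta$.

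The main obstacle I anticipate is handling the ceilings carefully: $\ceil{(1-\delta)K}$ versus $(1-\delta)K$ and $\ceil{\delta K}$ versus $\delta K$ introduce off-by-one slack, and one has to make sure the Markov bound is applied to an integer threshold that is genuinely implied by the failure event. The correct bookkeeping is $K - \ceil{(1-\delta)K} = \floor{\delta K}$, so failure needs at least $\floor{\delta K}+1 \ge \delta K$ flips among $P'$; dividing the expectation $\le \alpha K$ by this threshold gives at most $\alpha K/(\floor{\delta K}+1) \le \alpha K /(\delta K) = \alpha/\delta$, so the slack actually works in our favor and no loss is incurred. A secondary, purely cosmetic point is the degenerate cases $W(\cvec) = 0$ (the inequality is trivial since the RHS is $0$) and $\alpha/\delta \ge 1$ (the inequality is trivial since the RHS is $\le 0$), which I would dispatch in one line at the start. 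Everything else is the routine Markov computation sketched above, and no concentration inequality stronger than Markov is needed because the statement only asks for a multiplicative $(1 - \alpha/\delta)$ guarantee rather than exponential decay.
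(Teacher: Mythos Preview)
Your argument is correct and its core is the same counting/Markov idea the paper uses: bound the expected number of voters in $V_{c_0}(\cvec)$ that flip to $c_1$ by $\alpha$ times the size of that set, and divide by the $\delta K$ flips required to push the $c_0$-count below the relaxed threshold. Your restriction to a size-$K$ subset $P'\subseteq P$ is a clean way to make the arithmetic come out exactly as $\alpha/\delta$; the paper instead keeps all of $V_{c_0}(\bar\cvec)$ and uses $|R|/2 \le |V_{c_0}(\bar\cvec)|$ at the end, which amounts to the same thing. Your handling of the ceilings via $K-\ceil{(1-\delta)K}=\floor{\delta K}$ and $\floor{\delta K}+1\ge \delta K$ is fine.

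The one genuine organizational difference worth noting: you prove the stronger one-sided statement directly and then read off the $\alpha$-noisy comparative-stability claim as an immediate special case. The paper does it the other way around, first establishing the $\alpha$-noisy case and then, for the one-sided hypothesis, introducing an auxiliary construction (their Algorithm~1) that shifts probability mass so as to produce a dominated $\alpha$-noisy distribution $\yvec'$ with $\EX[W_\delta(\tilde\yvec')]\le \EX[W_\delta(\tilde\yvec)]$. Your ordering is more economical, since the counting step only ever uses $\Pr(\tilde y_r=c_1)\le\alpha$ for $r$ with $c_r=c_0$, so the one-sided hypothesis already suffices and the reduction is unnecessary.
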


We can use the result above to prove that $\W_{\delta\delta}$ satisfies the property of comparative stability with respect to $\W$. 
Intuitively, the result follows from the observation that $\W$ is the composition of two majority-voting steps.
\begin{restatable}{lemma}{lemmaThree}
\label{th:stability W_district and W_district_delta}
$\mathcal{W}_{\delta\delta}$ is $\frac{1}{\delta^2}$-stable with respect to $\mathcal{W}$.
\end{restatable}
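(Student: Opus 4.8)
The plan is to leverage Lemma~\ref{lemma: stability of W_delta wrt W} twice, once at the level of the within-district vote aggregation and once at the level of the across-district aggregation, using the compositional structure $\mathcal{W}_{\delta\delta} = \overline{W}_\delta\bigl(W^1_\delta(\cdot),\dots,W^D_\delta(\cdot)\bigr)$ and $\mathcal{W} = \overline{W}\bigl(W^1(\cdot),\dots,W^D(\cdot)\bigr)$. Fix an action profile $\cvec \in \C$ with $\mathcal{W}(\cvec) = 1$ (if $\mathcal{W}(\cvec) = 0$ there is nothing to prove), an $\alpha \in (0,1]$, and an $\alpha$-noisy distribution $\yvec$ around $\cvec$. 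The goal is to show $\EX_{\tilde\yvec \sim \yvec}[\mathcal{W}_{\delta\delta}(\tilde\yvec)] \geq 1 - \alpha/\delta^2$.

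First I would handle the inner step, district by district. For a district $d$ such that $W^d(\cvec^d) = c_0$, observe that the induced distribution on $\tilde\yvec^d$ is an $\alpha$-noisy distribution around $\cvec^d$ — in particular $\Pr_{\yvec}(\tilde y_r = c_1 \land c_r = c_0) \leq \alpha$ for each $r \in R^d$ — so the strengthened conclusion of Lemma~\ref{lemma: stability of W_delta wrt W} gives $\Pr_{\tilde\yvec}[W^d_\delta(\tilde\yvec^d) = c_0] = \EX[W^d_\delta(\tilde\yvec^d)] \geq 1 - \alpha/\delta$. For districts $d$ with $W^d(\cvec^d) = c_1$ we simply use the trivial bound $\Pr_{\tilde\yvec}[W^d_\delta(\tilde\yvec^d) = c_0] \geq 0$. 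Now I would define a new distribution $\yvec'$ over $C^{|D|}$ as the law of the vector $\bigl(W^1_\delta(\tilde\yvec^1),\dots,W^D_\delta(\tilde\yvec^D)\bigr)$ induced by pushing $\yvec$ through the district-level rules. The key observation is that $\yvec'$ is an $(\alpha/\delta)$-noisy distribution around the profile $\cvec' := \bigl(W^1(\cvec^1),\dots,W^D(\cvec^D)\bigr) \in C^{|D|}$: for each district $d$ with $c'_d = c_0$ we just showed the marginal probability of flipping to $c_1$ is at most $\alpha/\delta$, and districts with $c'_d = c_1$ impose no constraint in the sense required by the strengthened hypothesis of Lemma~\ref{lemma: stability of W_delta wrt W} applied at the outer level.

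Then I would apply Lemma~\ref{lemma: stability of W_delta wrt W} a second time, now to the pair $(\overline{W}_\delta, \overline{W})$ with the ``district as voter'' identification, the profile $\cvec'$, noise level $\alpha' = \alpha/\delta$, and the distribution $\yvec'$. Since $\mathcal{W}(\cvec) = \overline{W}(\cvec') = 1$, the strengthened conclusion yields
\[
\EX_{\tilde\yvec \sim \yvec}\bigl[\mathcal{W}_{\delta\delta}(\tilde\yvec)\bigr] = \EX_{\tilde\cvec' \sim \yvec'}\bigl[\overline{W}_\delta(\tilde\cvec')\bigr] \geq \overline{W}(\cvec')\,\Bigl(1 - \frac{\alpha/\delta}{\delta}\Bigr) = 1 - \frac{\alpha}{\delta^2},
\]
which is exactly $\mathcal{W}(\cvec)(1 - \alpha\beta)$ with $\beta = 1/\delta^2$, as required by Definition~\ref{def: Comparative Stability}.

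The main obstacle I anticipate is the bookkeeping around correlations and the precise form of the hypothesis of Lemma~\ref{lemma: stability of W_delta wrt W} at the outer level: one must check that the event ``$W^d_\delta(\tilde\yvec^d) = c_1$ while $c'_d = c_0$'' has probability bounded by $\alpha/\delta$ using only the marginal (per-district) control established in the inner step, which is fine because the outer lemma, like the inner one, makes no assumption on how the coordinate corruptions correlate. A minor subtlety worth stating explicitly is that the distribution $\yvec'$ lives on the deterministic images of the $W^d_\delta$, so it is genuinely a distribution over $C^{|D|}$ and the noise bounds transfer coordinatewise; once this is spelled out, the two applications of Lemma~\ref{lemma: stability of W_delta wrt W} compose cleanly and the factor $1/\delta^2$ emerges from multiplying the two $1/\delta$ factors.
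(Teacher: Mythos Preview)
Your proposal is correct and follows essentially the same approach as the paper: apply Lemma~\ref{lemma: stability of W_delta wrt W} at the inner level to bound $\Pr\bigl(W^d_\delta(\tilde\yvec^d)=c_1 \land W^d(\cvec^d)=c_0\bigr)\le \alpha/\delta$ for each district, then invoke the strengthened (one-sided) version of the same lemma at the outer level with noise parameter $\alpha/\delta$ to obtain the $1/\delta^2$ factor. Your explicit remark that only the one-sided flip bound is available at the outer step, and that this is precisely what the second part of Lemma~\ref{lemma: stability of W_delta wrt W} requires, matches the paper's reasoning exactly.
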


Finally, we derive a stronger decomposition lemma for majority-voting. 
Specifically, Lemma~\ref{th: general bound} shows that the decrease in the expected sender's utility when decomposing a posterior in $q$-uniform posteriors can be bounded. 
%
%Furthermore, while in generic Bayesian persuasion settings the sender's expected utility can change arbitrarily as the state of nature varies, this is not the case in majority voting, where, instead, the change is bounded.
%
However, in generic settings, the sender's expected utility in a given state of nature can change arbitrarily. This is not the case in majority voting, where, instead, this decrease is bounded.
In particular, we can show the following, that is crucial when addressing the \textsf{SEMIPUBLIC-DBE} problem.
\begin{restatable}{lemma}{lemmaFour}\label{corollary: th_general bound}
Let $\eps > 0,  \eta \in (0,1]$ and set $q= 32\log\left(\frac{4}{\eta \delta} \right)/\eps^2$.
Then, given a posterior $\pvec^\ast \in \P$, there exists a $\gammavec \in \Delta_\Q$ with $\sum_{\pvec \in \Q} \gamma_\pvec \, \pvec=\pvec^*$ and
\[
	\sum_{\pvec \in \Q} \gamma_\pvec  \,p_\theta \,W_\delta(\bvec^{\pvec,\eps}) \ge (1-\eta)\, p^\ast_\theta  \,W(\bvec^{\pvec^\ast}) \ \forall \theta \in \Theta.
\]
\end{restatable}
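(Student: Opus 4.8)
The plan is to mimic the proof of Lemma~\ref{th: general bound} but to track the contribution of each state of nature separately, exploiting the special multiplicative stability of $W_\delta$ with respect to $W$ proven in Lemma~\ref{lemma: stability of W_delta wrt W}. Fix the target posterior $\pvec^\ast \in \P$ and let $\cvec^\ast = \bvec^{\pvec^\ast}$ be the receivers' best response to it (so $W(\bvec^{\pvec^\ast}) \in \{0,1\}$; the statement is trivial when it is $0$, so assume it is $1$, i.e.\ at least $K$ receivers play $c_0$ under $\cvec^\ast$). First I would draw $q$ i.i.d.\ samples $\theta^{(1)}, \ldots, \theta^{(q)} \sim \pvec^\ast$ and let $\tilde\pvec$ be the resulting $q$-uniform empirical distribution; this induces a distribution $\gammavec$ over $\Q$ that is automatically Bayes-plausible, i.e.\ $\sum_{\pvec \in \Q}\gamma_\pvec\,\pvec = \pvec^\ast$, by linearity of expectation. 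The key point is that I must control, for \emph{each individual} $\theta$, the quantity $\EX_{\tilde\pvec}[ \tilde p_\theta \, W_\delta(\bvec^{\tilde\pvec,\eps})]$ rather than the weighted sum $\sum_\theta$.

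The central step is a concentration argument showing that, with high probability over the sampling, the empirical posterior $\tilde\pvec$ is ``good'' in the sense that $\bvec^{\tilde\pvec,\eps}$ keeps playing $c_0$ on a large subset of the receivers who played $c_0$ under $\cvec^\ast$. Concretely, consider any receiver $r$ in the set $A = \{r : \sum_\theta p^\ast_\theta u_r(\theta) \ge 0\}$ of receivers playing $c_0$ under $\cvec^\ast$; by a Chernoff/Hoeffding bound on the random variable $\sum_\theta \tilde p_\theta u_r(\theta)$, which has mean $\ge 0$ and is an average of $q$ bounded terms, the probability that $\sum_\theta \tilde p_\theta u_r(\theta) < -\eps$ is at most $\exp(-\eps^2 q /2)$ or so. With $q = 32\log(4/(\eta\delta))/\eps^2$ this probability is at most $(\eta\delta/4)^{16}$, far smaller than $\eta\delta$. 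Hence each $c_0$-receiver $r$ is ``lost'' (i.e.\ moved out of $A_\eps$ under $\tilde\pvec$) only with tiny probability; define $\yvec$ as the distribution of $\bvec^{\tilde\pvec,\eps}$ restricted to coordinates in $A$, so that $\Pr[\tilde y_r = c_1 \wedge c^\ast_r = c_0] \le \alpha := \eta\delta/4$ (after summing/union-bounding appropriately, or rather bounding each marginal). Then the second, refined conclusion of Lemma~\ref{lemma: stability of W_delta wrt W} applies: $\EX_{\tilde\yvec \sim \yvec}[W_\delta(\tilde\yvec)] \ge W(\cvec^\ast)(1 - \alpha/\delta) \ge 1 - \eta/4$. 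The subtlety is that I need this bound conditioned on, or jointly with, the event that $\tilde p_\theta$ is not much below $p^\ast_\theta$, so that the \emph{product} $\tilde p_\theta W_\delta(\bvec^{\tilde\pvec,\eps})$ is large; I would split $\EX[\tilde p_\theta W_\delta] = \EX[\tilde p_\theta] - \EX[\tilde p_\theta(1 - W_\delta)] = p^\ast_\theta - \EX[\tilde p_\theta(1-W_\delta)]$ and bound $\EX[\tilde p_\theta(1-W_\delta)] \le \EX[1 - W_\delta] \le \eta/4 \le \eta\, p^\ast_\theta$ when $p^\ast_\theta \ge 1/4$, and handle the regime $p^\ast_\theta < 1/4$ (where $p^\ast_\theta W(\bvec^{\pvec^\ast})$ on the right is already small) separately — actually the cleanest route is $\EX[\tilde p_\theta W_\delta] \ge \EX[\tilde p_\theta] - \Pr[W_\delta = 0] = p^\ast_\theta - \Pr[W_\delta=0] \ge p^\ast_\theta(1-\eta)$ once $\Pr[W_\delta = 0] \le \eta\, p^\ast_\theta$, which requires $p^\ast_\theta$ not too small; for small $p^\ast_\theta$ one instead uses that $\tilde p_\theta$ and $W_\delta$ are close to independent in the relevant sense, or notes the right-hand side is $\le p^\ast_\theta \le \eta^{-1}\cdot(\text{small})$. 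Let me keep the plan at the level of: union-bound over receivers to get $\Pr[W_\delta(\bvec^{\tilde\pvec,\eps}) = 0] \le |R| \exp(-\eps^2 q/2) \le \eta\delta/4 \cdot (\text{tiny})$, wait — I want it $\le \eta\, p^\ast_\theta$, so I may need to also condition on $\tilde p_\theta$; the right move is to apply the bound only to the conditional measure where we've fixed that coordinate $\theta$ appears roughly its expected number of times.

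Putting the pieces together: for each $\theta \in \Theta$,
\[
\sum_{\pvec \in \Q}\gamma_\pvec \, p_\theta \, W_\delta(\bvec^{\pvec,\eps}) = \EX_{\tilde\pvec}\!\left[\tilde p_\theta \, W_\delta(\bvec^{\tilde\pvec,\eps})\right] \ge (1-\eta)\, p^\ast_\theta \, W(\bvec^{\pvec^\ast}),
\]
which is exactly the claimed inequality; since it holds for every $\theta$, the probabilistic construction certifies the existence of some $\gammavec \in \Delta_\Q$ satisfying it (indeed the sampled $\gammavec$ works with positive probability, hence one exists, and one can even take expectations since Bayes-plausibility and the per-$\theta$ inequality are both preserved under mixing the random $\gammavec$ over the sampling randomness). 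I expect the main obstacle to be the per-state bookkeeping in the concentration step: in Lemma~\ref{th: general bound} only the \emph{aggregate} $\sum_\theta$ needs to survive, so a single clean Chernoff bound on the sender's utility suffices, whereas here the product $\tilde p_\theta \cdot W_\delta$ couples the "how often does $\theta$ show up" randomness with the "do enough receivers still vote $c_0$" randomness, and I must argue these interact benignly — most naturally by bounding $\EX[\tilde p_\theta(1 - W_\delta(\bvec^{\tilde\pvec,\eps}))]$ via $\EX[\mathbf{1}\{W_\delta = 0\}]$ (dropping $\tilde p_\theta \le 1$) and showing the latter is at most $\eta\, p^\ast_\theta$; this last inequality is where the precise constant in $q = 32\log(4/(\eta\delta))/\eps^2$ (note the $\eta\delta$, sharper than the $\eta\min\{1;1/\beta\}$ of Lemma~\ref{th: general bound} because here $\beta = 1/\delta$ and the refined clause of Lemma~\ref{lemma: stability of W_delta wrt W} lets us only pay for $c_0 \to c_1$ flips) has to be chased through carefully, together with a union bound over the at most $|R|$ receivers and possibly a separate easy argument for the states $\theta$ with tiny $p^\ast_\theta$.
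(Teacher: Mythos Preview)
Your high-level plan matches the paper's: same sampling construction of $\gammavec$, same use of the refined clause of Lemma~\ref{lemma: stability of W_delta wrt W} (only $c_0\!\to\!c_1$ flips matter), and you correctly identify that the new difficulty compared with Lemma~\ref{th: general bound} is making the bound hold \emph{state by state} rather than after summing over $\theta$. You even name the right fix in passing (``apply the bound only to the conditional measure where we've fixed that coordinate $\theta$ appears roughly its expected number of times''). But the argument you actually commit to --- bounding $\EX[\tilde p_\theta(1-W_\delta)]$ by dropping $\tilde p_\theta \le 1$ and using $\Pr[W_\delta=0]\le\eta p^\ast_\theta$ --- has a real gap: $\Pr[W_\delta(\bvec^{\tilde\pvec,\eps})=0]$ is a single number that does not depend on $\theta$, so it cannot be $\le \eta p^\ast_\theta$ uniformly when $p^\ast_\theta$ is arbitrarily small. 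Your ``handle small $p^\ast_\theta$ separately / near-independence'' sketch does not close this.

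The paper's proof resolves exactly this point by conditioning on $\tilde p_\theta=i/q$ \emph{before} applying stability. It reuses the three sub-lemmas from the proof of Lemma~\ref{th: general bound}: (i) the binomial tail bound showing $\sum_{i:|i/q-p^\ast_\theta|>\eps/4}\Pr[\tilde p_\theta=i/q]\le \tfrac{\alpha}{2}p^\ast_\theta$ (note the crucial factor $p^\ast_\theta$, obtained via Chernoff, which is precisely what makes the small-$p^\ast_\theta$ case go through); (ii) the bound that, conditional on any good $i$, each receiver's utility shifts by at most $\eps$ with probability $\ge 1-\alpha/2$; hence (iii) the conditional action distribution is $\tfrac{\alpha}{2}$-noisy around $\bvec^{\pvec^\ast}$. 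Then the chain \eqref{eq: th general bound; sender utility_1}--\eqref{eq: th general bound; sender utility_Last} is run \emph{without} the outer $\sum_\theta$. The only step that could fail per-state is \eqref{eq:optimality}, where one replaces $g_\theta(\bvec^{\pvec,\eps})$ by $g_\theta(\yvec^\pvec)$; in general $\bvec^{\pvec,\eps}$ maximizes the $\pvec$-expectation, not each $g_\theta$ individually. The paper observes that here $W_\delta$ is state-independent and monotone in the set of $c_0$-voters, so $\bvec^{\pvec,\eps}$ simply sets $b_r=c_0$ for every $r$ with $\sum_\theta p_\theta u_r(\theta)\ge -\eps$, and therefore $W_\delta(\bvec^{\pvec,\eps})\ge W_\delta(\yvec^\pvec)$ holds deterministically, hence for every $\theta$. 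That observation is the one genuinely new ingredient beyond Lemma~\ref{th: general bound}; you use it implicitly but do not isolate it. In short: your plan is on the right track, but you need to carry out the conditioning-on-$i$ argument you only hinted at, rather than the crude $\tilde p_\theta\le 1$ bound.
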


\paragraph{Computing Public and Semi-public Signaling Schemes in District-based Elections}
We present two multi-criteria PTASs for the \textsf{SEMIPUBLIC-DBE} and \textsf{PUBLIC-DBE} problems, respectively, when our relaxations are adopted.
First, we focus on the problem of designing public signaling schemes.
We assume $\eps$-persuasive signaling schemes, and we replace function $\W$ with $\W_{\delta\delta}$ (this corresponds to relaxing both the majority voting within every single district and the majority voting aggregating the outcomes of all the districts).
%
%we allow the target candidate to win in a district even if it receives less votes than the majority, and we allow the target candidate to win in a district even if it receives less votes than the majority
%
Let $\W(\phi)$ and $\W_{\delta\delta}(\phi)$ denote the functions returning the sender's expected utility provided by a public signaling scheme $\phi$ with voting rules $\W$ and $\W_{\delta\delta}$, respectively.
We show that it is possible to compute efficiently an $\eps$-persuasive public signaling scheme $\phi_\eps$ that approximates the optimal persuasive signaling scheme with an approximation factor arbitrarily close to 1.
Since the relaxed function $\W_{\delta\delta}$ is $1/\delta^2$-stable compared to the non-relaxed function $\W$ by Theorem \ref{th:stability W_district and W_district_delta}, we can immediately apply Theorem \ref{th: bi-criteria general} to these functions and then derive the following.
\begin{corollary}
\label{th: bicriteria PTAS for PUBLIC-DBE}
Let $\eps > 0$, $\delta \in (0,1)$ and $\eta \in (0,1]$, then there exists a $poly\left(|R| \; |\Theta|^{\log\left( \frac{1}{\eta \, \delta^2} \right)/\eps^2} \right)$ time algorithm that returns an $\eps$-persuasive public signaling scheme $\phi_\eps$ such that:
\begin{equation}
\W_{\delta\delta}(\phi_\eps) \ge (1 - \eta) \, \max_{\phi \in \Phi}	\W(\phi),
\end{equation}
where $\Phi$ is the set of persuasive signaling schemes.
\end{corollary}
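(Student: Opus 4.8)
The plan is to obtain Corollary~\ref{th: bicriteria PTAS for PUBLIC-DBE} as an immediate instantiation of Theorem~\ref{th: bi-criteria general}. I would invoke that theorem with the two (state-independent) families obtained by setting $g_\theta := \W_{\delta\delta}$ and $h_\theta := \W$ for every $\theta \in \Theta$, with stability parameter $\beta := 1/\delta^2$, and with the values of $\eps$ and $\eta$ fixed as in the statement. Since in this setting the sender's utility $f_\theta$ does not depend on $\theta$, these constant families are just a special case of the state-dependent functions $g_\theta,h_\theta:\C\to[0,1]$ that Theorem~\ref{th: bi-criteria general} is designed to handle.

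The first step is to check that the hypotheses of Theorem~\ref{th: bi-criteria general} hold under this substitution. Both $\W$ and $\W_{\delta\delta}$ take values in $\{0,1\}\subseteq[0,1]$, so they are legitimate inputs. For the stability requirement, I would cite Lemma~\ref{th:stability W_district and W_district_delta}, which states precisely that $\W_{\delta\delta}$ is $(1/\delta^2)$-stable compared with $\W$; because neither function depends on $\theta$, this at once gives that $g_\theta$ is $\beta$-stable compared with $h_\theta$ for all $\theta\in\Theta$, i.e., exactly the condition the theorem asks for.

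The second step is simply to read off the conclusion. Theorem~\ref{th: bi-criteria general} then yields a $poly\!\left(|R|\,|\Theta|^{\log(1/(\eta\,\min\{1;1/\beta\}))/\eps^2}\right)$-time algorithm returning an $\eps$-persuasive public signaling scheme $\phi_\eps$ with $g(\phi_\eps)\ge(1-\eta)\max_{\phi\in\Phi}h(\phi)$, which, under our choice of $g$ and $h$, reads $\W_{\delta\delta}(\phi_\eps)\ge(1-\eta)\max_{\phi\in\Phi}\W(\phi)$ — the desired inequality.

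The last step is a cosmetic rewriting of the running time: since $\delta\in(0,1)$ we have $\beta=1/\delta^2>1$, hence $\min\{1;1/\beta\}=\delta^2$ and the exponent simplifies to $\log(1/(\eta\delta^2))/\eps^2$, matching the claimed bound. I do not expect a genuine obstacle here, as all the real work is carried by Theorem~\ref{th: bi-criteria general} and Lemma~\ref{th:stability W_district and W_district_delta}; the only points that warrant care are confirming that the two phrasings ``$\beta$-stable compared with'' in Definition~\ref{def: Comparative Stability} and ``$\beta$-stable with respect to'' in Lemma~\ref{th:stability W_district and W_district_delta} denote the same notion, and that a $\theta$-independent utility function is covered by the state-dependent formulation of Theorem~\ref{th: bi-criteria general}.
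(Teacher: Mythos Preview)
Your proposal is correct and follows exactly the paper's approach: the paper states the corollary as an immediate consequence of applying Theorem~\ref{th: bi-criteria general} with $g=\W_{\delta\delta}$, $h=\W$, and $\beta=1/\delta^2$ supplied by Lemma~\ref{th:stability W_district and W_district_delta}. Your additional remarks on the running-time simplification and the identification of the two stability phrasings are accurate and merely spell out what the paper leaves implicit.
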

Then, we focus on the \textsf{SEMIPUBLIC-DBE} problem.
As highlighted above, to overcome the intractability result, also in this setting, it is necessary to relax the problem.
Specifically, we use $\eps$-persuasive signaling schemes and we replace function $\W$ with $\W_{\delta}$ (this corresponds to relaxing the majority voting aggregating the outcomes of all the districts).
We show that it is possible to compute efficiently an $\eps$-persuasive semi-public signaling scheme $\phi_\eps$ that approximates the optimal persuasive signaling scheme with an approximation factor arbitrarily close to 1.
Computing a semi-public signaling scheme $\phi$ amounts to determining a collection $\{\phi_d\}_{d\in D}$ of $|D|$ public signaling schemes, one for each district, and correlate them.
The crucial point concerns the computation of good marginal probabilities of the signaling scheme. 
Indeed, their aggregation is equivalent to computing a private signaling scheme in majority-voting elections, and this can be done efficiently (see LP \eqref{lp:private_general} and Theorem \ref{th: optimal_private}).
The main idea of our proof is that there are approximately optimal marginal probabilities of the signaling scheme that use only $q$-uniform posteriors (with $q$ constant). 
Let $\alpha_\theta$ be the probability that $c_0$ wins in at least $K_D$ districts with state of nature $\theta$, $\betadue^\delta_{d,\theta}$ be the probability that candidate $c_0$ receives at least $\ceil{(1-\delta)\, K_d}$ votes in district $d$ with state of nature $\theta$, and $\gammavec^d$ be a probability distribution over posteriors for the receivers in district $d$. 
Finally, let $\mathbb{I}[\mathcal{E}]$ denote the indicator function for the event $\mathcal{E}$.
Then, the following formulation computes an approximately optimal signaling scheme in polynomial time.

\begin{subequations}\label{lp:semi_efficient}
	\begin{align}
	&\max_{\substack{\alpha\in [0,1]^{|\Theta|},\, \betadue^\delta \in [0,1]^{|D|\times|\Theta|}\\
                    \pvar,\hvar\in\mathbb{R}^{|\Theta|\times K_D},\, \gvar\in\mathbb{R}^{|D| \times |\Theta| \times K_D}\\
                    \gammavec^d \in \Delta_{\Q} \forall d \in D}} \sum_{\theta\in\Theta}\mu_\theta\alpha_\theta \label{eq:semi_obj}\\
	&\textnormal{s.t. }\alpha_\theta\leq\frac{1}{K_D - m} \pvar_{\theta,m}\label{eq:semi_alpha_ub}\\ &\hspace{2.4cm}\forall\theta\in\Theta, \forall m\in\{0,\ldots,K_D-1\} \nonumber\\
	&\hspace{0.6cm} \pvar_{\theta,m}\leq (|D|-m)\hvar_{\theta,m}+\sum_{d\in D}\gvar_{d,\theta,m}\label{eq:semi_p_ub} \\
	&\hspace{2.4cm}\forall \theta\in\Theta,\forall m \in \{0,\ldots,K_D-1\}\nonumber\\
	&\hspace{0.6cm}\betadue^\delta_{d,\theta}\geq \hvar_{\theta,m} + \gvar_{d,\theta,m} \label{eq:semi_beta_lb}\\
	&\hspace{1.2cm}\forall d\in D,\forall \theta\in\Theta, \forall m\in \{0,\ldots,K_D - 1\} \nonumber\\
	&\hspace{0.6cm} \betadue^\delta_{d,\theta} \le \sum_{\pvec \in \Q} \frac{\gamma_\pvec^d \, p_\theta}{\mu_\theta}  \mathbb{I}\left(W_\delta^d(\bvec^{\pvec,\eps})=c_0\right) \label{eq:semi_beta_ub}\\
	&\hspace{4.7cm}\forall d\in D,\forall \theta\in\Theta \nonumber\\
	& \hspace{0.6cm}\sum_{\pvec \in \Q} \gamma_\pvec^d \, p_\theta = \mu_\theta \label{eq:semi_p_bayes_rationality}
	\hspace{1.8cm} \forall d\in D,\forall \theta\in\Theta
	\end{align}
\end{subequations}

\begin{restatable}{theorem}{theoremThree}
\label{th: bicriteria PTAS for SEMIPUBLIC-DBE}
Let $\eps > 0$, $\delta \in (0,1)$ and $\eta \in (0,1]$, then there	exists a $poly\left( |R| \; |\Theta|^{\log\left( \frac{1}{\eta \, \delta} \right)/\eps^2} \right)$ time algorithm that outputs an $\eps$-persuasive semi-public signaling scheme $\phi_\eps$ such that:
\begin{equation}
\W_\delta(\phi_\eps) \ge (1 - \eta) \, \max_{\phi \in \Phi} \W(\phi),
\end{equation}
where $\Phi$ is the set of persuasive signaling schemes.
\end{restatable}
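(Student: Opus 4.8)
The plan is to show that LP~\eqref{lp:semi_efficient} is a valid, polynomially-sized relaxation whose optimal value lower-bounds $(1-\eta)\max_{\phi \in \Phi}\W(\phi)$, and then to show how any feasible solution of the LP can be turned into an $\eps$-persuasive semi-public signaling scheme achieving value $\W_\delta(\phi_\eps)$ equal to the LP objective. First I would observe that the number of $q$-uniform posteriors $|\Q|$ is $|\Theta|^{O(q)} = |\Theta|^{O(\log(1/(\eta\delta))/\eps^2)}$, so with $q = 32\log(4/(\eta\delta))/\eps^2$ the LP has $\mathrm{poly}(|R|\,|\Theta|^{\log(1/(\eta\delta))/\eps^2})$ variables and constraints and is solvable in the claimed time. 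The variables $\gammavec^d \in \Delta_\Q$ encode, for each district $d$, an indirect public signaling scheme supported on $q$-uniform posteriors; constraint~\eqref{eq:semi_p_bayes_rationality} is exactly Bayes-plausibility for district $d$. Given such a $\gammavec^d$, the derived district-$d$ public scheme recommends to the receivers in $R^d$ their best response $\bvec^{\pvec,\eps}$ under each induced posterior $\pvec$, which is $\eps$-persuasive by construction; the quantity on the RHS of~\eqref{eq:semi_beta_ub} is precisely the probability (conditioned on state $\theta$) that this scheme makes $c_0$ collect at least $\ceil{(1-\delta)K_d}$ votes in district $d$, i.e.\ that $W_\delta^d$ evaluates to $c_0$. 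So $\betadue^\delta_{d,\theta}$ is a valid lower bound on the per-state district-win probabilities.

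The second ingredient is the ``outer'' aggregation: constraints~\eqref{eq:semi_alpha_ub}--\eqref{eq:semi_beta_lb} are exactly the constraints of LP~\eqref{lp:private_general} that compute, from per-state per-district win probabilities $\betadue^\delta_{d,\theta}$, the maximum probability $\alpha_\theta$ with which $c_0$ carries at least $K_D$ districts — this is the private-signaling / Fréchet-type aggregation of Bernoulli marginals already justified in the proof sketch of Theorem~\ref{th: optimal_private} (via \citet{arieli2019private}). The key point, which I would state as a lemma, is that the district outcomes across $d$ are a set of (correlated-as-the-sender-wishes) binary random variables with prescribed per-state marginals $\betadue^\delta_{d,\theta}$, so the same min-of-prefix-sums characterization applies, and there exists an actual correlation of the district schemes $\{\phi_d\}$ realizing the $\alpha_\theta$ output by the LP. Thus any feasible LP solution yields a genuine $\eps$-persuasive semi-public scheme $\phi_\eps$ with $\W_\delta(\phi_\eps) = \sum_\theta \mu_\theta \alpha_\theta$ equal to the LP objective.

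For the approximation guarantee I would argue the converse direction: take an optimal persuasive (public, hence semi-public) signaling scheme $\phi^\star \in \Phi$ achieving $\mathrm{OPT} = \max_{\phi\in\Phi}\W(\phi)$. View $\phi^\star$ as a distribution over posteriors; for each posterior $\pvec^\star$ in its support, apply Lemma~\ref{corollary: th_general bound} to replace $\pvec^\star$ by a Bayes-plausible combination $\gammavec \in \Delta_\Q$ of $q$-uniform posteriors such that $\sum_{\pvec\in\Q}\gamma_\pvec\, p_\theta\, W_\delta(\bvec^{\pvec,\eps}) \ge (1-\eta)\, p^\star_\theta\, W(\bvec^{\pvec^\star})$ \emph{for every} $\theta$ — it is crucial here that the bound is per-state, since the outer majority step treats each state separately. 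Mixing these decompositions over the support of $\phi^\star$ (district by district; note $\phi^\star$ being public gives the same posteriors to every district, so each $\gammavec^d$ can be taken equal) produces a feasible LP solution in which, for each state $\theta$, the induced per-district win probabilities are at least $(1-\eta)$ times those under $\phi^\star$; feeding these through the same monotone aggregation that computes $\alpha_\theta$ shows the LP optimum is at least $(1-\eta)\sum_\theta \mu_\theta \alpha^\star_\theta = (1-\eta)\,\mathrm{OPT}$. Combining with the first part gives $\W_\delta(\phi_\eps) \ge (1-\eta)\max_{\phi\in\Phi}\W(\phi)$.

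The main obstacle I anticipate is the per-state, not just in-expectation, control in the decomposition step and its interaction with the outer majority aggregation: one must ensure that the scalar output $\alpha_\theta$ of the aggregation LP~\eqref{eq:semi_alpha_ub}--\eqref{eq:semi_beta_lb} is \emph{monotone} in the inputs $\betadue^\delta_{d,\theta}$ (so that replacing the true district-win probabilities by their $(1-\eta)$-shrunk $q$-uniform surrogates only costs a factor $1-\eta$), and that this monotone aggregation is simultaneously achievable by an actual correlation of the per-district schemes — i.e.\ that the Fréchet/Arieli--Ostrovsky realizability argument used for $\betadue_{d,\theta}$ in Theorem~\ref{th: optimal_private} carries over verbatim to the district-level Bernoullis. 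A secondary technical point is checking that recommending $\bvec^{\pvec,\eps}$ within each district genuinely yields $\eps$-persuasiveness of the \emph{semi-public} incentive constraints (the middle bullet list item specialized to districts), which follows since the semi-public constraint for receiver $r\in R^d$ only conditions on the district-$d$ signal and the derived scheme uses exactly the $\eps$-best response $\bvec^{\pvec,\eps}$ per posterior $\pvec$.
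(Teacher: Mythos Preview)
Your overall architecture matches the paper's proof: solve LP~\eqref{lp:semi_efficient}, use Lemma~\ref{corollary: th_general bound} per district to show the LP optimum is at least $(1-\eta)$ times the benchmark, and use the private-signaling aggregation (Theorem~\ref{th: optimal_private}) both to argue realizability of the $\alpha_\theta$'s and to propagate the $(1-\eta)$ factor through the outer majority via monotonicity of the min-of-prefix-sums formula. All of that is exactly what the paper does.

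There is, however, one genuine slip. You take $\phi^\star$ to be an optimal \emph{public} scheme (``public, hence semi-public'') and then use this to conclude that all $\gammavec^d$ can be taken equal. But in this theorem $\Phi$ is the set of persuasive \emph{semi-public} schemes, and the benchmark $\max_{\phi\in\Phi}\W(\phi)$ is the semi-public optimum, which can be strictly larger than the public optimum. With your choice of $\phi^\star$ you would only establish $\W_\delta(\phi_\eps)\ge(1-\eta)\cdot\text{(public OPT)}$, which does not give the stated bound. The paper instead lets $\phi^\star$ be the optimal semi-public scheme, takes its \emph{marginal} $\phi^\star_d$ on each district $d$ (which is a genuine public scheme for the receivers in $R^d$, generally different across $d$), and applies Lemma~\ref{corollary: th_general bound} separately to each $\phi^\star_d$ to produce a district-specific $\gammavec^d\in\Delta_\Q$ with $\betadue^\delta_{d,\theta}\ge(1-\eta)\betadue^\ast_{d,\theta}$ for every $\theta$. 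Your ``district by district'' phrase suggests you almost had this; you just need to drop the assumption that $\phi^\star$ is public and work with the per-district marginals of the semi-public optimum instead. Once that is corrected, the monotonicity argument you sketch (each $\betadue^\ast_{d,\theta}$ shrinks by at most $(1-\eta)$, hence so does each $\alpha_\theta$ via Equation~\eqref{eq:alpha}-style aggregation) is precisely the paper's conclusion.
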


\section{Conclusions and Future Works}

In this paper, we study how a manipulator can exploit his information advantage to manipulate a district-based election through the strategic provision of information to rational voters.
We show that private signaling schemes can be computed efficiently while computing optimal \mbox{(semi-)}public signaling schemes is intractable.
However, we show that reasonable relaxations allow the design of multi-criteria PTASs for \mbox{(semi-)}public persuasion.
An interpretation of these relaxations is that \mbox{(semi-)}public signaling is often tractable, except when the target candidate wins in at least half of the districts, but it is impossible to make slightly more than half of them elect such a candidate.
In most cases, the receivers are slightly biased to follow the sender recommendations, and the manipulator's preferred candidate can either win or not win by at least a small, but not negligible, margin. 
With these assumptions, our algorithm approximates arbitrarily well the optimal signaling scheme in polynomial time.

In the future, we will study classes of instances in which optimal \mbox{(semi-)}public signaling schemes can be computed efficiently.
We are also interested in settings in which the sender is uncertain about the voters' preferences.

\section{Acknowledgments}	
	This work has been partially supported by the Italian MIUR PRIN 2017 Project ALGADIMAR ``Algorithms, Games, and Digital Market''.

\bibliography{biblio}

\begin{thebibliography}{26}
\providecommand{\natexlab}[1]{#1}
\providecommand{\url}[1]{\texttt{#1}}
\providecommand{\urlprefix}{URL }
\expandafter\ifx\csname urlstyle\endcsname\relax
  \providecommand{\doi}[1]{doi:\discretionary{}{}{}#1}\else
  \providecommand{\doi}{doi:\discretionary{}{}{}\begingroup
  \urlstyle{rm}\Url}\fi

\bibitem[{Allcott and Gentzkow(2017)}]{allcott2017social}
Allcott, H.; and Gentzkow, M. 2017.
\newblock Social media and fake news in the 2016 election.
\newblock \emph{Journal of economic perspectives} 31(2): 211--36.

\bibitem[{Alonso and C{\^a}mara(2016)}]{alonso2016persuading}
Alonso, R.; and C{\^a}mara, O. 2016.
\newblock Persuading voters.
\newblock \emph{American Economic Review} 106(11): 3590--3605.

\bibitem[{Arieli and Babichenko(2019)}]{arieli2019private}
Arieli, I.; and Babichenko, Y. 2019.
\newblock Private bayesian persuasion.
\newblock \emph{J ECON THEORY} 182: 185--217.

\bibitem[{Bardhi and Guo(2018)}]{bardhi2018modes}
Bardhi, A.; and Guo, Y. 2018.
\newblock Modes of persuasion toward unanimous consent.
\newblock \emph{Theoretical Economics} 13(3): 1111--1149.

\bibitem[{Bergemann and Morris(2016{\natexlab{a}})}]{bergemann2016bayes}
Bergemann, D.; and Morris, S. 2016{\natexlab{a}}.
\newblock Bayes correlated equilibrium and the comparison of information
  structures in games.
\newblock \emph{THEOR ECON} 11(2): 487--522.

\bibitem[{Bergemann and Morris(2016{\natexlab{b}})}]{bergemann2016information}
Bergemann, D.; and Morris, S. 2016{\natexlab{b}}.
\newblock Information design, {Bayesian} persuasion, and {Bayes} correlated
  equilibrium.
\newblock \emph{AM ECON REV} 106(5): 586--91.

\bibitem[{Castiglioni, Celli, and
  Gatti(2020{\natexlab{a}})}]{castiglioni2019persuading}
Castiglioni, M.; Celli, A.; and Gatti, N. 2020{\natexlab{a}}.
\newblock Persuading Voters: It's Easy to Whisper, It's Hard to Speak Loud.
\newblock In \emph{The Thirty-Fourth {AAAI} Conference on Artificial
  Intelligence, {AAAI} 2020, The Thirty-Second Innovative Applications of
  Artificial Intelligence Conference, {IAAI} 2020, The Tenth {AAAI} Symposium
  on Educational Advances in Artificial Intelligence, {EAAI} 2020, New York,
  NY, USA, February 7-12, 2020}, 1870--1877. {AAAI} Press.
\newblock
  \urlprefix\url{https://aaai.org/ojs/index.php/AAAI/article/view/5555}.

\bibitem[{Castiglioni, Celli, and Gatti(2020{\natexlab{b}})}]{castiglioni2020}
Castiglioni, M.; Celli, A.; and Gatti, N. 2020{\natexlab{b}}.
\newblock Public Bayesian Persuasion: Being Almost Optimal and Almost
  Persuasive.
\newblock \emph{ArXiv} abs/2002.05156.

\bibitem[{Castiglioni, Ferraioli, and Gatti(2020)}]{castiglioni2020election}
Castiglioni, M.; Ferraioli, D.; and Gatti, N. 2020.
\newblock Election Control in Social Networks via Edge Addition or Removal.
\newblock In \emph{The Thirty-Fourth {AAAI} Conference on Artificial
  Intelligence, {AAAI} 2020, The Thirty-Second Innovative Applications of
  Artificial Intelligence Conference, {IAAI} 2020, The Tenth {AAAI} Symposium
  on Educational Advances in Artificial Intelligence, {EAAI} 2020, New York,
  NY, USA, February 7-12, 2020}, 1878--1885. {AAAI} Press.
\newblock
  \urlprefix\url{https://aaai.org/ojs/index.php/AAAI/article/view/5556}.

\bibitem[{Chan et~al.(2019)Chan, Gupta, Li, and Wang}]{chan2019pivotal}
Chan, J.; Gupta, S.; Li, F.; and Wang, Y. 2019.
\newblock Pivotal persuasion.
\newblock \emph{Journal of Economic Theory} 180: 178--202.

\bibitem[{Chen et~al.(2017)Chen, Faliszewski, Niedermeier, and
  Talmon}]{chen2017elections}
Chen, J.; Faliszewski, P.; Niedermeier, R.; and Talmon, N. 2017.
\newblock Elections with few voters: candidate control can be easy.
\newblock \emph{J ARTIF INTELL RES} 60: 937--1002.

\bibitem[{{Cheng} et~al.(2015){Cheng}, {Cheung}, {Dughmi}, {Emamjomeh-Zadeh},
  {Han}, and {Teng}}]{cheng2015mixture}
{Cheng}, Y.; {Cheung}, H.~Y.; {Dughmi}, S.; {Emamjomeh-Zadeh}, E.; {Han}, L.;
  and {Teng}, S. 2015.
\newblock Mixture Selection, Mechanism Design, and Signaling.
\newblock In \emph{FOCS}, 1426--1445.

\bibitem[{Erd{\'e}lyi, Reger, and Yang(2020)}]{erdelyi2020complexity}
Erd{\'e}lyi, G.; Reger, C.; and Yang, Y. 2020.
\newblock The complexity of bribery and control in group identification.
\newblock \emph{Autonomous Agents and Multi-Agent Systems} 34(1): 8.

\bibitem[{Faliszewski et~al.(2018)Faliszewski, Gonen, Kouteck{\`y}, and
  Talmon}]{faliszewski2018opinion}
Faliszewski, P.; Gonen, R.; Kouteck{\`y}, M.; and Talmon, N. 2018.
\newblock Opinion Diffusion and Campaigning on Society Graphs.
\newblock In \emph{IJCAI}, 219--225.

\bibitem[{Faliszewski, Hemaspaandra, and
  Hemaspaandra(2011)}]{faliszewski2011multimode}
Faliszewski, P.; Hemaspaandra, E.; and Hemaspaandra, L.~A. 2011.
\newblock Multimode control attacks on elections.
\newblock \emph{Journal of Artificial Intelligence Research} 40: 305--351.

\bibitem[{Faliszewski et~al.(2009)Faliszewski, Hemaspaandra, Hemaspaandra, and
  Rothe}]{faliszewski2009llull}
Faliszewski, P.; Hemaspaandra, E.; Hemaspaandra, L.~A.; and Rothe, J. 2009.
\newblock Llull and Copeland voting computationally resist bribery and
  constructive control.
\newblock \emph{Journal of Artificial Intelligence Research} 35: 275--341.

\bibitem[{Guess, Nyhan, and Reifler(2018)}]{guess2018selective}
Guess, A.; Nyhan, B.; and Reifler, J. 2018.
\newblock Selective exposure to misinformation: Evidence from the consumption
  of fake news during the 2016 US presidential campaign.
\newblock \emph{European Research Council} 9.

\bibitem[{Kamenica and Gentzkow(2011)}]{kamenica2011bayesian}
Kamenica, E.; and Gentzkow, M. 2011.
\newblock Bayesian persuasion.
\newblock \emph{AM ECON REV} 101(6): 2590--2615.

\bibitem[{Liu et~al.(2009)Liu, Feng, Zhu, and Luan}]{liu2009parameterized}
Liu, H.; Feng, H.; Zhu, D.; and Luan, J. 2009.
\newblock Parameterized computational complexity of control problems in voting
  systems.
\newblock \emph{THEOR COMPUT SCI} 410(27-29): 2746--2753.

\bibitem[{Loreggia et~al.(2015)Loreggia, Narodytska, Rossi, Venable, and
  Walsh}]{loreggia2015controlling}
Loreggia, A.; Narodytska, N.; Rossi, F.; Venable, K.~B.; and Walsh, T. 2015.
\newblock Controlling elections by replacing candidates or votes.
\newblock In \emph{Proceedings of the 2015 International Conference on
  Autonomous Agents and Multiagent Systems}, 1737--1738.

\bibitem[{Sina et~al.(2015)Sina, Hazon, Hassidim, and Kraus}]{sina2015adapting}
Sina, S.; Hazon, N.; Hassidim, A.; and Kraus, S. 2015.
\newblock Adapting the social network to affect elections.
\newblock In \emph{Proceedings of the 2015 International Conference on
  Autonomous Agents and Multiagent Systems}, 705--713.

\bibitem[{Taneva(2015)}]{taneva2015information}
Taneva, I.~A. 2015.
\newblock Information design .

\bibitem[{Wang(2013)}]{wang2013bayesian}
Wang, Y. 2013.
\newblock Bayesian persuasion with multiple receivers.
\newblock \emph{Available at SSRN 2625399} .

\bibitem[{Wilder and Vorobeychik(2018)}]{wilder2018controlling}
Wilder, B.; and Vorobeychik, Y. 2018.
\newblock Controlling elections through social influence.
\newblock In \emph{Proceedings of the 17th international conference on
  autonomous agents and multiagent systems}, 265--273. International Foundation
  for Autonomous Agents and Multiagent Systems.

\bibitem[{Wilder and Vorobeychik(2019)}]{wilder2019defending}
Wilder, B.; and Vorobeychik, Y. 2019.
\newblock Defending elections against malicious spread of misinformation.
\newblock In \emph{Proceedings of the AAAI Conference on Artificial
  Intelligence}, volume~33, 2213--2220.

\bibitem[{Xu(2020)}]{xu2020tractability}
Xu, H. 2020.
\newblock On the Tractability of Public Persuasion with No Externalities.
\newblock In \emph{Proceedings of the Fourteenth Annual ACM-SIAM Symposium on
  Discrete Algorithms}, 2708--2727. SIAM.

\end{thebibliography}
\clearpage
\section{Omitted Proofs on ``Private Persuasion''}
\theoremOne*

\begin{proof}

	LP~\eqref{lp:private_general} has a polynomial number of variables and constraints and, therefore, it can be solved in polynomial time.
	Thus, we just need to prove that LP~\eqref{lp:private_general} actually computes an optimal solution to \textsf{PRIVATE-DBE}.
	First, we remark that all the marginal probabilities $\phi_r(\theta,c_0)$ of the signaling scheme $\phi$ must satisfy the incentive Constraints~\eqref{eq:lp_pr_persuasive_1}.
	$\betadue_{d,\theta}$ represents the probability of having at least $K_d$ votes in district $d$, given state of nature $\theta$.
	We need to show $\betadue_{d,\theta}$ is computed correctly given the other variables of LP~\eqref{lp:private_general}. 
	In particular, for every state of nature $\theta$, the maximum probability with which at least $K_d$ of the receivers in $R^d$ vote for $c_0$ given marginals probabilities $\phi_r(\theta,c_0)$ is:
	\[
	\betadue_{d,\theta}=\min\left\{\min_{m\in\{0,\dots,K_d-1\}} \frac{1}{K_d-m}\qvar_{\theta,m}; \, 1 \, \right\},
	\]
	where $\qvar_{\theta,m}$ is the sum of the lowest $|R^d|-m$ elements in the set $\{\phi_r(\theta,c)\}_{r\in R^d}$;  further details are provided by~\cite{arieli2019private}.
	This definition is encoded by Constraints~\eqref{eq:lp_pr_beta_ub}.
	Constraints~\eqref{eq:lp_pr_q_ub1} and~\eqref{eq:lp_pr_phi_lb1} ensure the values $\qvar_{\theta,m}$ are well defined and derived from the dual of a simple LP of this
	kind: 
	\begin{align*}
		\min_{\mathbf{y}\in\R^n} 		& \textbf{x}^\top \mathbf{y}				\\
								& \mathbf{1}^\top \mathbf{y}=w								\\
								& \mathbf{0} \leq \mathbf{y} \leq \mathbf{1}
	\end{align*}
	where $\mathbf{x}\in\R^n$ is the vector from which we want to extract the sum of the smallest $w$ entries.
	Finally, we prove that $\alpha_\theta$  is computed correctly. 
	The computation of the maximum probability $\alpha_\theta$ with which at least $K_D$ districts elect $c_0$ given probabilities $\betadue_{d,\theta}$ is similar to the computation of $\betadue_{d,\theta}$ given $\phi_r(\theta,c_0)$.
	For a similar argument as above, Constraints \eqref{eq:lp_pr_alpha_ub}, \eqref{eq:lp_pr_p_ub}, and \eqref{eq:lp_pr_beta_lb}  correctly compute $\alpha_\theta$ aggregating the marginal probabilities $\{\betadue_{d,\theta}\}_{d\in D, \theta \in \Theta}$.
	Objective~\eqref{eq:lp_pr_obj} is given by the sum over all $\theta\in\Theta$ of the prior of state $\theta$, multiplied by $\alpha_\theta$. 
	Thus, by definition of $\alpha_\theta$, we are maximizing the probability of having $c_0$ locally elected in more than $K_D$ districts.
	Finally, we prove how to construct a signaling scheme $\phi'$ with the same objective function of LP~\eqref{lp:private_general}. 
In particular, we find marginal signaling schemes $\phi_r'$ such that the incentive constraints relative to $c_0$ and $c_1$ are satisfied and $\phi_r'(\theta,c_0)\ge \phi_r(\theta,c_0)$ for all $r$ and $\theta$.
Since we do not introduce the incentive constraint relative to action $c_1$, they could not be satisfied by $\phi$.
	However, from the optimal marginal probabilities $\phi_r(\theta,c_0)$, it is straightforward to compute the marginal probabilities $\{\, \phi'_r(\theta, c_0),\, \phi'_r(\theta, c_1) \,\}_{r\in R, \theta \in \Theta}$.
	For each state of nature $\theta$, let $\phi'_r(\theta, c_0)=1$ if $u_r(\theta) \geq 0$ and $\phi'_r(\theta,c_0)=\phi_r(\theta)$ otherwise. 
	Then, $\phi'_r(\theta,c_1)=1-\phi'_r(\theta,c_0)$.
	The marginal signaling scheme $\phi_r'$ is persuasive as $c_1$ is recommended only when it is the optimal action, while $\phi_r'(\theta, c_0)\ge \phi_r(\theta, c_0)$ if and only if $u_\theta\ge 0$. 
	Formally, $\sum_{\theta\in\Theta}\mu_\theta\,\phi'_r(\theta,c_0)\,u_r(\theta) \geq \sum_{\theta\in\Theta}\mu_\theta\,\phi_r(\theta,c_0)\,u_r(\theta) \ge 0$ by constraints \eqref{eq:lp_pr_persuasive_1}.
	Finally, we can aggregate the marginal probabilities of the signaling scheme by using the same approach proposed by~\citet{arieli2019private}.
	%
	%This concludes the proof. 
\end{proof}

\section{Omitted Proofs on ``Comparative Stability and Public Signaling Schemes''}

\lemmaOne*

\begin{proof} 
	Let $\tilde\gammavec \in \Q$ be the empirical distribution of $q$ i.i.d. samples drawn from $\pvec^\ast$, where each $\theta$ has probability $p^\ast_\theta$ of being sampled. 
	Therefore, the vector $\tilde\gammavec$ is a random variable supported on $q$-uniform posteriors with expectation $\pvec^\ast$. 
	Moreover, let $\gammavec\in\Delta_{\Q}$ be a probability distribution such as, for every $p\in\Q$, it holds $\gamma_\pvec = \Pr(\tilde\gammavec=\pvec)$.
	It is easy to see that $\pvec^\ast=\sum_{\pvec\in \Q}\gamma_{\pvec} \pvec$.
	We need to prove that Equation~(\ref{eq:general_bound}) holds.
	For every $\pvec \in\Q$, we define with $\gamma_\pvec^{(\theta,i)}$ the conditional probability of having observed posterior $\pvec$ given that the posterior assigns a probability of $i/q$ to state $\theta$.
	Formally, for every $\pvec \in\Q$, we have:
	\begin{equation}
	\gamma_\pvec^{(\theta,i)} =
	\begin{cases}
	\displaystyle\frac{ \gamma_\pvec}{\sum\limits_{\pvec'\in \Q:p'_\theta=i/q}\gamma_{\pvec'}} & \text{if $p_\theta=i/q$}\\
	& \\
	\hspace{1cm}0 & \text{otherwise}
	\end{cases}.
	\nonumber
	\end{equation}
	Then, the random variable $\tilde\gammavec^{(\theta,i)}\in\Q$ is such that, for every $\pvec \in\Q$, it holds $\Pr(\tilde\gammavec^{(\theta,i)}=\pvec)=\gamma^{(\theta,i)}_\pvec$.
	For each $r\in R$, we define $\mathcal{P}^r \subseteq \Q$ as the set of posteriors that do not change the expected utility of $r$ by more than $\eps$ with respect to $\pvec^\ast$. 
	Formally, $\pvec \in \mathcal{P}^r$ if and only if $|\sum_\theta p_\theta \, u_r(\theta) -\sum_\theta \, p^\ast_\theta u_r(\theta) |\le \eps$.
	Finally, let $\alpha = \eta\,\min\{1;\,1/\beta\}$.

	To complete the proof, we introduce the following three lemmas.
	First, given a probability distribution $\pvec^\ast$ and a state of nature $\theta \in \Theta$, the following lemma bounds the maximum probability mass that $\gammavec$ assigns to posteriors $\pvec \in \Q$ in which the probability assigned to state of nature $\theta$ deviates from the one prescribed by $\pvec^\ast$ by at least $\eps/4$. 
	\begin{lemma}\label{lemma: Th.3 lemma_1}
		Given $\pvec^\ast\in \P$, for each $\theta \in \Theta$, it holds:
		\[
		\sum_{i : |i/q-p^\ast_\theta|\ge \eps/4} \; \sum_{\pvec \in \Q:p_\theta=i/q} \gamma_p \le \frac{\alpha}{2} \, p^\ast_\theta,
		\]
		where $\gammavec$ is the probability distribution of $q$ i.i.d samples drawn from $\pvec^\ast$.
	\end{lemma}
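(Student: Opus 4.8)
The plan is to recognize the left-hand side as a binomial tail probability and then apply classical concentration inequalities, the only real care being to keep the bound proportional to $p^\ast_\theta$ rather than merely small.

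First I would unpack the definition of $\gammavec$: since $\tilde\gammavec$ is the empirical distribution of $q$ i.i.d.\ draws from $\pvec^\ast$, the number $X$ of those draws that land on a fixed state $\theta$ satisfies $X \sim \mathrm{Bin}(q, p^\ast_\theta)$, and $\sum_{\pvec \in \Q : p_\theta = i/q} \gamma_\pvec = \Pr[X = i]$; hence the quantity to bound is exactly $\Pr\!\big[\,|X/q - p^\ast_\theta| \ge \eps/4\,\big]$. Writing $\alpha = \eta\min\{1;\,1/\beta\}$ as in the statement, I would split this into the lower-tail event $\{X \le q(p^\ast_\theta - \eps/4)\}$ and the upper-tail event $\{X \ge q(p^\ast_\theta + \eps/4)\}$ and bound each by $\tfrac{\alpha}{4}p^\ast_\theta$, so their sum is at most $\tfrac{\alpha}{2}p^\ast_\theta$. (We may assume $\eps \le 1$, the relevant regime.)

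For the lower tail: if $p^\ast_\theta \le \eps/4$ the event is (essentially) empty; otherwise the multiplicative Chernoff bound with relative deviation $\delta = \eps/(4 p^\ast_\theta) \in (0,1)$ gives $\Pr \le \exp(-\delta^2 q p^\ast_\theta/2) = \exp(-q\eps^2/(32 p^\ast_\theta)) = (\alpha/4)^{1/p^\ast_\theta}$ after substituting $q = 32\log(4/\alpha)/\eps^2$, and an elementary estimate (the map $p \mapsto p\cdot 4^{1/p}$ is decreasing on $(0,1]$ with value $4$ at $p=1$, hence $\ge 4$ throughout) yields $(\alpha/4)^{1/p^\ast_\theta} \le \tfrac{\alpha}{4}p^\ast_\theta$. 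For the upper tail the event is empty when $p^\ast_\theta + \eps/4 > 1$; otherwise I would split on the magnitude of $p^\ast_\theta$. When $p^\ast_\theta \ge \alpha^3/64$, Hoeffding's inequality suffices: $\Pr \le \exp(-q\eps^2/8) = (\alpha/4)^4 = \alpha^4/256 \le \tfrac{\alpha}{4}p^\ast_\theta$. When $p^\ast_\theta < \alpha^3/64$, I would instead use the Poisson-type upper-tail bound $\Pr[X \ge m] \le (e\,q p^\ast_\theta/m)^m$ with $m = \lceil q(p^\ast_\theta + \eps/4)\rceil \ge q\eps/4$ (or the multiplicative Chernoff bound $\Pr[X \ge (1+\delta)qp^\ast_\theta] \le e^{-\delta^2 qp^\ast_\theta/(2+\delta)}$ where that is cleaner), and check, using $p^\ast_\theta < \alpha^3/64$ together with the value of $q$, that the result is again at most $\tfrac{\alpha}{4}p^\ast_\theta$.

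The hard part will be this last regime. The target $\tfrac{\alpha}{4}p^\ast_\theta$ is most demanding exactly when $p^\ast_\theta$ is tiny, yet that is precisely where the one-sided slack $\eps/4$ is a large multiple of the mean $q p^\ast_\theta$, so a Gaussian-tail bound such as Hoeffding is too crude to produce an estimate that scales linearly in $p^\ast_\theta$; one must pass to a Poisson-flavoured tail bound and track the $p^\ast_\theta$-dependence explicitly. Once the right inequality is chosen in each regime of $p^\ast_\theta$, the remainder is bookkeeping: substitute $q = 32\log(4/\alpha)/\eps^2$ and verify that the constant $32$ is large enough in each case.
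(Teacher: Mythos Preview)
Your plan is essentially the same as the paper's: interpret the left-hand side as the binomial tail $\Pr\big[\,|\tilde\gamma_\theta-p^\ast_\theta|\ge\eps/4\,\big]$, split into a large-$p^\ast_\theta$ regime handled by Hoeffding and a small-$p^\ast_\theta$ regime handled by a sharper Chernoff-type bound. The paper splits at the fixed constant $p^\ast_\theta=1/8$ rather than your $\alpha$-dependent threshold $\alpha^3/64$, and in the small regime it invokes a Chernoff bound of the form
\[
\Pr\!\left(\tilde\gamma_\theta-p^\ast_\theta\ge\tfrac{\eps}{4}\right)\le
\exp\!\Big(-q(\eps/4)^2\,\tfrac{1}{1-2p^\ast_\theta}\,\log\tfrac{1-p^\ast_\theta}{p^\ast_\theta}\Big),
\]
which immediately yields a bound of the shape $(c\,p^\ast_\theta)^{2\log(4/\alpha)}$ and hence decays strictly faster than linearly in $p^\ast_\theta$; a similar (but different) Chernoff form handles the lower tail. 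Your lower-tail computation via $(\alpha/4)^{1/p^\ast_\theta}\le\tfrac{\alpha}{4}p^\ast_\theta$ through the monotonicity of $p\mapsto p\cdot4^{1/p}$ is a clean alternative to the paper's argument. For the upper tail in the small-$p^\ast_\theta$ regime, note that neither the $e^{-\delta^2\mu/(2+\delta)}$ form nor the Poisson form $(e\mu/m)^m$ alone covers \emph{all} $p^\ast_\theta<\alpha^3/64$ uniformly in $\eps$ (the first loses the linear-in-$p^\ast_\theta$ scaling when $p^\ast_\theta\ll\eps$, the second is vacuous when $p^\ast_\theta\gtrsim\eps/4$); you would need one further sub-split between these two, which your sketch hints at but does not spell out. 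With that extra split the argument goes through and is equivalent in spirit to the paper's, whose single log-odds Chernoff bound simply sidesteps that sub-case analysis.
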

	\begin{proof}

		We observe that the random variable $\tilde\gamma_{\theta}$ is drawn from a Binomial probability distribution.
		We consider two possible cases.
		If $p^\ast_\theta \ge 1/8$, then by Hoeffding's inequality we can write the following:
		\begin{subequations} 
			\begin{align}
			\Pr\left(|\tilde{\gamma}_\theta-p^\ast_\theta|\ge\frac{\eps}{4}\right)  &\le 2 \,e^{-2\,q\, (\epsilon/4)^2}=\\
			&= 2\,e^{-4\log(4/\alpha )}\le \\
			&\le\alpha/16\le \frac{\alpha}{2} \,p^*_\theta.
			\end{align}
		\end{subequations}

		Instead, if $p^\ast_\theta \le 1/8$, then by Chernoff's bound we can write the following:
		\begin{subequations} \label{eq:chernoff}
			\begin{align}
			\Pr\left(\tilde{\gamma}_\theta-p^\ast_\theta\ge \frac{\eps}{4} \right) &\leq  e^{-q (\epsilon/4)^2 \frac{1}{1-2p^*_\theta} \log(\frac{1-p^*_\theta}{p^*_\theta})}\le\\
			&
			\le e^{-2 \log(4/\alpha) \log(\frac{7}{8p^*_\theta})}= \\
			&
			= (\frac{8}{7} \,p^*_{\theta})^{2 \log(4/\alpha)} = \\%\label{eq:lemma_3}\\
			&
			= \left( \frac{1}{e} \,{\frac{8}{7}\,e\, p^*_{\theta}}\right)^{2 \log(4/\alpha)}\leq \\
			&\le (e)^{ -2 \log(4/\alpha)} \frac{8}{7}\,e\,p^\ast_\theta \le\label{eq:chernoff_4}\\ 
			&\le \frac{\alpha}{16} \frac{8}{7} \,e \,p^\ast_\theta \le\\
			&\le \frac{\alpha}{4} \,p^\ast_\theta,
			\end{align}
		\end{subequations}
		%where in Equation~\eqref{eq:chernoff_4} we use the inequalities $\frac{8}{7}\,e\, p^*_{\theta}\le1$ and $2 \log(4/\alpha)\ge1$.
		%
		Moreover, we can write:
		\begin{subequations}
			\begin{align}
			\Pr\left(\tilde{\gamma}_\theta-p^\ast_\theta \le - \frac{\eps}{4} \right) &\leq  e^{-q (\epsilon/4)^2 \frac{1}{2(1-p^*_\theta)p_\theta^*}} =\\
			&
			=e^{-\frac{\log(4/\alpha)}{p^\ast_\theta}} =\\
			&=\left( e^{\frac{1}{p^\ast_\theta}} \right)^{\log\left(\frac{\alpha}{4}\right)}\leq\label{eq:lemma_3}\\
			&\le  \left(\frac{1}{p^\ast_\theta} e\right)^{\log\left(\frac{\alpha}{4}\right)}\leq \label{eq:lemma_4}\\ 
			&\le  \left(\frac{1}{p^\ast_\theta}\right)^{-1} e^{\log\left(\frac{\alpha}{4}\right)} = \label{eq:lemma_5}\\
			&= \frac{\alpha}{4} \, p^\ast_\theta,\label{eq:lemma_6}
			\end{align}
		\end{subequations}
		where in Equations~\eqref{eq:lemma_4} and \eqref{eq:lemma_5} we use that $e^x\ge e \,x$ and $\log(\alpha/4) < -1$ as $\alpha \in (0,1]$.
		Hence, we obtain the following inequality:		
		\[
		\sum_{i : |i/q-p^\ast_\theta|> \eps/4} \;\sum_{p \in \Q:p_\theta=i/q} \gamma_p =\Pr\left( |\tilde\gamma_\theta-p^\ast_\theta | > \frac{\eps}{4} \right)\leq \frac{\alpha}{2}p_\theta^\ast,
		\]
		which concludes the proof.
	\end{proof}

	The second lemma we introduce proves that, when  $\pvec_\theta$ is close to $\pvec^\ast$, then the utility of every receiver is close to the utility in $\pvec^\ast$ with high probability.
	\begin{lemma}\label{lemma: Th.3 lemma_2}
		Given $\pvec^\ast\in \P$, for each receiver $r \in R$, each state $\theta \in \Theta$ and each $i:|i/q-p^\ast_\theta|\le \eps/4$, it holds:
		\[
		\sum_{\pvec \in \mathcal{P}^r :p_\theta=i/q } \gamma_\pvec \ge \left(1-\frac{\alpha}{2}\right) \sum_{\pvec \in \Q : p_\theta=i/q } \gamma_\pvec,
		\]
		where $\gammavec$ is the distribution of $q$ i.i.d samples from $\pvec^\ast$.
	\end{lemma}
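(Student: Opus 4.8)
The statement is a conditional concentration claim in disguise. Writing $\tilde\gammavec$ for the empirical distribution of the $q$ i.i.d.\ samples from $\pvec^\ast$ (exactly as in the proof of Lemma~\ref{th: general bound}), the left-hand side divided by $\sum_{\pvec\in\Q:p_\theta=i/q}\gamma_\pvec$ is precisely $\Pr(\tilde\gammavec\in\mathcal{P}^r \mid \tilde\gamma_\theta=i/q)$ (if that denominator is $0$, both sides of the lemma are $0$ and there is nothing to prove). So the plan is to show this conditional probability is at least $1-\alpha/2$, with $\alpha=\eta\min\{1;1/\beta\}$ and $q=32\log(4/\alpha)/\eps^2$ as fixed above.

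First I would identify the conditional law. Conditioning $q$ i.i.d.\ draws from $\pvec^\ast$ on the event that exactly $i$ of them equal $\theta$ leaves the remaining $q-i$ draws i.i.d.\ from the renormalized distribution $\nu$ on $\Theta\setminus\{\theta\}$ given by $\nu_{\theta'}=p^\ast_{\theta'}/(1-p^\ast_\theta)$ (the degenerate cases $p^\ast_\theta\in\{0,1\}$ and $i=q$ are disposed of directly: the conditional posterior is then deterministic and, using $|i/q-p^\ast_\theta|\le\eps/4$, one checks $|\sum_{\theta'}p_{\theta'}u_r(\theta')-\sum_{\theta'}p^\ast_{\theta'}u_r(\theta')|\le 2(1-p^\ast_\theta)\le\eps/2$, so $\pvec\in\mathcal{P}^r$). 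Hence a generic posterior $\pvec$ in the conditional support has $p_\theta=i/q$ and $p_{\theta'}=(1-i/q)\hat p_{\theta'}$ for $\theta'\ne\theta$, where $\hat\pvec$ is the empirical distribution of $q-i$ i.i.d.\ $\nu$-samples.

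Next I would split the utility gap. Setting $\bar u:=\sum_{\theta'\ne\theta}\nu_{\theta'}u_r(\theta')\in[-1,1]$ and $X:=\sum_{\theta'\ne\theta}(\hat p_{\theta'}-\nu_{\theta'})u_r(\theta')$, a one-line rearrangement gives
\[
\sum_{\theta'}p_{\theta'}u_r(\theta')-\sum_{\theta'}p^\ast_{\theta'}u_r(\theta')=\left(\tfrac{i}{q}-p^\ast_\theta\right)u_r(\theta)+\left(p^\ast_\theta-\tfrac{i}{q}\right)\bar u+\left(1-\tfrac{i}{q}\right)X .
\]
Since $|u_r(\theta)|\le 1$, $|\bar u|\le 1$, and $|i/q-p^\ast_\theta|\le\eps/4$ by hypothesis, the first two terms contribute at most $\eps/2$ in absolute value, so it is enough to force $(1-i/q)|X|\le\eps/2$. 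Now $X$ is the centered empirical average of $q-i$ i.i.d.\ random variables taking values in $[-1,1]$, so Hoeffding's inequality yields $\Pr(|X|\ge t\mid\tilde\gamma_\theta=i/q)\le 2\exp(-(q-i)t^2/2)$ for every $t>0$.

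Finally I would take $t=\eps/(2(1-i/q))$ (legitimate since $i<q$ in the remaining case): on $\{|X|<t\}$ the whole gap is below $\eps$, hence $\pvec\in\mathcal{P}^r$. Using $1-i/q=(q-i)/q$, the exponent collapses to $q^2\eps^2/(8(q-i))\ge q\eps^2/8=4\log(4/\alpha)$ by the choice of $q$, so the failure probability is at most $2(\alpha/4)^4\le\alpha/2$ because $\alpha\le 1$. This gives $\Pr(\pvec\in\mathcal{P}^r\mid\tilde\gamma_\theta=i/q)\ge 1-\alpha/2$, which is exactly the asserted inequality. The only step needing care is the bookkeeping of the $1-i/q$ factor: it is precisely what lets a single Hoeffding bound work uniformly over all admissible $i$ — when $i$ is close to $q$ there are few residual samples but their aggregate weight $1-i/q$ is correspondingly small — so no separate case analysis on the size of $p^\ast_\theta$ is required.
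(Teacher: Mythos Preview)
Your proof is correct and follows essentially the same route as the paper's: both condition on $\tilde\gamma_\theta=i/q$, recognize the remaining $q-i$ samples as i.i.d.\ from the renormalized distribution on $\Theta\setminus\{\theta\}$, bound the deterministic bias of the utility gap by $\eps/2$ via $|i/q-p^\ast_\theta|\le\eps/4$, and finish with Hoeffding. The paper packages the bias step as $\sum_{\theta'}|\EX[\tilde\gamma_{\theta'}^{(\bar\theta,i)}]-p^\ast_{\theta'}|\le\eps/2$ and then applies Hoeffding to $\tilde t$ directly, whereas you decompose the gap algebraically and apply Hoeffding to the centered residual $X$; your explicit tracking of the $(1-i/q)$ factor is a nice touch but not a different idea. (One minor quibble: your sentence lumping $p^\ast_\theta=0$ into the ``deterministic conditional posterior'' degenerate case is inaccurate---when $p^\ast_\theta=0$ and $i=0$ the conditional posterior is still random---but that case is already covered by your main argument with $\nu=\pvec^\ast$, so nothing breaks.)
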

	\begin{proof}
		Fix $\bar \theta\in\Theta$, $ r \in R$ and $i$ with $|i/q-p^\ast_{\bar \theta}|\le \eps/4$.
		Then, let $\tilde t = \sum_{\theta}\tilde\gamma_\theta^{(\bar\theta,i)} u_r(\theta)$ and $t=\sum_{\theta} p^\ast_\theta u_r(\theta)$, where the notation $\tilde\gamma_\theta^{(\bar\theta,i)}$ is employed to denote the value of $p_\theta$ given that the random variable $\tilde\gammavec^{(\bar\theta,i)} \in \Q$ assumes value $\pvec$.
		First, we show that $|\, \EX[\,\tilde t\,]-t \,|\leq\eps/2$.
		This is equivalent to prove the following:
		\[|\sum_{\theta}u_r(\theta)\left(\EX[\tilde\gamma_\theta^{(\bar\theta,i)}]-p^\ast_\theta\right)|\leq \sum_{\theta}|\EX[\tilde\gamma_\theta^{(\bar\theta,i)}]-p^\ast_\theta|\le  \eps/2.\]
		Assume $i/q\geq p^\ast_{\bar\theta}$, then,
		\begin{equation}\label{eq:risultatino}
		\begin{split}
		&\sum_{\theta}|\EX[\,\tilde\gamma_\theta^{(\bar\theta,i)}\,]-p^\ast_\theta|=\\
		&=\frac{i}{q}-p^\ast_{\bar\theta}+ \sum_{\theta\neq\bar\theta}\left(p^\ast_\theta - \frac{p^\ast_\theta}{\sum_{\theta'\neq\bar\theta}p^\ast_{\theta'}}\,\left(1-\frac{i}{q}\right) \right)\leq\\
		&\leq  \frac{\eps}{4}+1-p^\ast_{\bar\theta}-1+\frac{i}{q}\leq \frac{\eps}{2}. \nonumber
		\end{split}	
		\end{equation}
		Analogously, if $i/q\leq p^\ast_{\bar\theta}$, we get that $\sum_{\theta}|\,\EX[\,\tilde\gamma_\theta^{(\bar\theta,i)}\,]-p^\ast_\theta\,|\leq \frac{\eps}{2}$.
		Now, we can exploit the fact that $|\,\EX[\,\tilde t\,]-t\,|\leq\eps/2$ to show that: $\Pr(|\,t-\tilde t\,|\geq \eps)\leq \Pr(\,|\,\tilde t-\EX[\,\tilde t\,]\,|\geq \eps/2)$ by the  triangular inequality. Then, we use the Hoeffding's inequality to bound the last term:
		\begin{equation*}\label{eq:lemma_hoeffding}
		\Pr(\,|\,\tilde t- \EX[\,\tilde t\,]\,|\geq \eps/2) \leq 2e^{-\frac{2q}{4}(\frac{\eps}{2})^2} \le
		2e^{-\log(4/\alpha)} 
		= \frac{\alpha}{2}
		\end{equation*}
		By definition of $\mathcal{P}^r$, this implies that $\Pr(\tilde \gammavec^{(\bar \theta, i)} \in \mathcal{P}^r) \ge 1 - \alpha/2$.
		Finally,
		\begin{align*}
		\sum_{\pvec \in \mathcal{P}^r : p_{\bar{\theta}}=i/q} \gamma_\pvec =& \Pr\left( \tilde{\gamma}_{\bar\theta}=\frac{i}{q}\right) \Pr \left( \tilde \gammavec \in \mathcal{P}^r \mid \tilde\gamma_{\bar\theta}=\frac{i}{q}\right)=\\=& 
		\Pr\left( \tilde{\gamma}_{\bar\theta}=\frac{i}{q}\right)\Pr\left(\tilde\gammavec^{(\bar\theta,i)}\in\mathcal{P}^r\right)\geq\\\geq&
		\left(1-\frac{\alpha}{2}\right) \Pr\left( \tilde{\gamma}_{\bar\theta}=\frac{i}{q}\right)=\\
		=&\left(1-\frac{\alpha}{2}\right) \sum_{\pvec \in \Q : p_{\bar{\theta}}=i/q} \gamma_\pvec.
		\end{align*}
	\end{proof}

	Before introducing the last lemma, we need some further notation. 
	More precisely, given a posterior, we partition the receivers in three sets, depending on their possible best-responses.
	We define the partition on the set of receivers induced by $\pvec^\ast \in \P$ as follows:
	\begin{itemize}
		\item $A = \left\{ r \in R : \sum_\theta p^\ast_\theta \, u_r(\theta)>0\right\}$,
		\item $B = \left\{ r \in R : \sum_\theta p^\ast_\theta \, u_r(\theta)<0\right\}$,
		\item $E = \left\{ r  \in R : \sum_\theta p^\ast_\theta \, u_r(\theta)=0\right\}$.
	\end{itemize}
	Similarly, any $q$-uniform posterior $\pvec \in \Q$ induces the following partition to the set of receivers when $\eps$-persuasiveness is adopted:
	\begin{itemize}
		\item $A_\eps = \left\{ r \in R : \sum_\theta p_\theta \, u_r(\theta)>\eps\right\}$,
		\item $B_\eps = \left\{ r \in R : \sum_\theta p_\theta \, u_r(\theta)<-\eps\right\}$,
		\item $E_\eps = \left\{ r  \in R : \sum_\theta p_\theta \, u_r(\theta)\in[-\eps,\eps]\right\}$.
	\end{itemize}
	Then, we define an auxiliary variable $\yvec^{\pvec} \in \C$ as follows:
	\begin{itemize}
		\item For every $r \in A$,  $y^{\pvec}_r =
		\left\{
		\begin{array}{ll} 
		c_0& \mbox{if } r \in  A_\eps \cup E_\eps \\
		c_1 & \mbox{otherwise}
		\end{array}
		\right..
		$
		\item For every $r \in B$,  $y^{\pvec}_r =
		\left\{
		\begin{array}{ll}
		c_1& \mbox{if } r \in  B_\eps \cup E_\eps \\
		c_0 & \mbox{otherwise}
		\end{array}
		\right..
		$
		\item For every $r \in E$,  $y^{\pvec}_r =
		\left\{
		\begin{array}{ll}
		b^{\pvec^\ast}_r& \mbox{if } r \in  E_\eps \\
		c_0& \mbox{if } r \in  A_\eps  \\
		c_1& \mbox{if } r \in  B_\eps \\
		\end{array}
		\right..
		$
	\end{itemize}
	Note that, by construction, $\yvec^\pvec$ is a valid action profile under $\eps$-persuasiveness. 
	Moreover, by the optimality of the $\eps$-persuasive best-response, the following holds for every posterior $\pvec$:
	\begin{equation}\label{eq:y^p .r.t. best response}
	\sum_\theta  p_\theta \,g_\theta(\bvec^{\pvec,\eps}) \ge \sum_\theta p_\theta \,g_\theta(\yvec^\pvec).
	\end{equation}
	Finally, let $\tilde \yvec^{(\theta,i)} \in \C$ be the random variable such that:
	\[\Pr(\,\tilde \yvec^{(\theta,i)}=\cvec\,)=\frac{\sum_{\pvec\in \Q, \pvec_\theta=i/q,\yvec^{\pvec}=\cvec} \gamma_\pvec}{\sum_{\pvec'\in \Q : \pvec'_\theta=i/q} \gamma_{\pvec'}}.\]

	Now, we introduce the last lemma we use to complete the proof. 
	This lemma proves that $\tilde \yvec^{\theta,i}$ are $\frac{\alpha}{2}$-noisy probability distributions around $\bvec^{\pvec^\ast}$.
	\begin{lemma}\label{lemma3}
		Given $\pvec^\ast \in \P$, for each $\theta \in \Theta$ and $i:|i/q-p^\ast_\theta| \le \eps/4$,  $\tilde \yvec^{(\theta,i)} \in \C$ is a $\frac{\alpha}{2}$-noisy probability distribution around $\bvec^{\pvec^\ast}$.
	\end{lemma}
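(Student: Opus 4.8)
The plan is to verify directly the definition of a $\frac{\alpha}{2}$-noisy distribution: it suffices to show that $\Pr\big[\tilde y^{(\theta,i)}_r \neq b^{\pvec^\ast}_r\big] \le \alpha/2$ for every receiver $r \in R$. By the very construction of $\tilde\yvec^{(\theta,i)}$, this probability equals $\big(\sum_{\pvec \in \Q:\, p_\theta = i/q,\, y^\pvec_r \neq b^{\pvec^\ast}_r} \gamma_\pvec\big)\big/\big(\sum_{\pvec' \in \Q:\, p'_\theta = i/q} \gamma_{\pvec'}\big)$, i.e., the conditional $\gammavec$-probability of the event $\{y^\pvec_r \neq b^{\pvec^\ast}_r\}$ given that the posterior assigns probability $i/q$ to $\theta$. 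The heart of the argument is to show that this event is contained in the complement of $\mathcal{P}^r$, that is, a mismatch between $\yvec^\pvec$ and $\bvec^{\pvec^\ast}$ at coordinate $r$ forces $\big|\sum_\theta p_\theta u_r(\theta) - \sum_\theta p^\ast_\theta u_r(\theta)\big| > \eps$.

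I would establish this containment by a short case analysis on the block of the partition $A, B, E$ induced by $\pvec^\ast$ to which $r$ belongs. If $r \in A$, then $b^{\pvec^\ast}_r = c_0$ and, by the definition of $\yvec^\pvec$ on $A$, $y^\pvec_r = c_0$ unless $r \in B_\eps$; hence a mismatch implies $r \in B_\eps$, so $\sum_\theta p_\theta u_r(\theta) < -\eps$ while $\sum_\theta p^\ast_\theta u_r(\theta) > 0$, a change of more than $\eps$. The case $r \in B$ is symmetric. If $r \in E$, the definition of $\yvec^\pvec$ on $E$ gives $y^\pvec_r = b^{\pvec^\ast}_r$ whenever $r \in E_\eps$, so a mismatch forces $r \in A_\eps \cup B_\eps$, whence $\big|\sum_\theta p_\theta u_r(\theta) - \sum_\theta p^\ast_\theta u_r(\theta)\big| = \big|\sum_\theta p_\theta u_r(\theta)\big| > \eps$ (recall $\sum_\theta p^\ast_\theta u_r(\theta) = 0$). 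In every case $\pvec \notin \mathcal{P}^r$, as desired.

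It then only remains to invoke Lemma~\ref{lemma: Th.3 lemma_2}. Since $|i/q - p^\ast_\theta| \le \eps/4$, that lemma gives $\sum_{\pvec \in \mathcal{P}^r:\, p_\theta = i/q} \gamma_\pvec \ge (1 - \alpha/2)\sum_{\pvec \in \Q:\, p_\theta = i/q} \gamma_\pvec$, so the $\gammavec$-mass of the posteriors with $p_\theta = i/q$ lying outside $\mathcal{P}^r$ is at most $\frac{\alpha}{2}\sum_{\pvec \in \Q:\, p_\theta = i/q} \gamma_\pvec$; by the containment of the previous paragraph the same bound holds for the mass of $\{\pvec:\, p_\theta = i/q,\, y^\pvec_r \neq b^{\pvec^\ast}_r\}$. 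Dividing by $\sum_{\pvec' \in \Q:\, p'_\theta = i/q} \gamma_{\pvec'}$ yields $\Pr[\tilde y^{(\theta,i)}_r \neq b^{\pvec^\ast}_r] \le \alpha/2$ for every $r$, i.e., $\tilde\yvec^{(\theta,i)}$ is $\frac{\alpha}{2}$-noisy around $\bvec^{\pvec^\ast}$. I do not expect any genuine obstacle here beyond keeping the case analysis organized; the only point that needs a moment of care is checking that for $r \in E$ the (possibly arbitrary) argmax choice defining $b^{\pvec^\ast}_r$ never produces a mismatch when $r \in E_\eps$, which is immediate from how $\yvec^\pvec$ is defined on $E$.
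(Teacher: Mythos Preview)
Your proposal is correct and follows essentially the same approach as the paper: both arguments reduce to the containment $\{\pvec:\, y^\pvec_r \neq b^{\pvec^\ast}_r\} \subseteq \Q \setminus \mathcal{P}^r$ (equivalently, $\pvec \in \mathcal{P}^r \Rightarrow y^\pvec_r = b^{\pvec^\ast}_r$) and then invoke Lemma~\ref{lemma: Th.3 lemma_2}. The only difference is that you spell out the case analysis on $A,B,E$ explicitly and work with the complementary probability, whereas the paper simply writes the corresponding inequality in one line without justification.
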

	\begin{proof}
		We need to prove that for every receiver $r$, it holds $\Pr(\tilde y_r^{(\theta,i)} = b^{\pvec^\ast}_r) \ge 1-\alpha/2$.
It holds:
		\begin{equation}
		\begin{aligned}
		\Pr(\tilde y_r^{(\theta,i)} =& b^{\pvec^*}_r) = \frac{\sum_{\pvec\in \Q: p_\theta=i/q,y^{\pvec}_r=b^{\pvec^*}_r} \gamma_\pvec}{\sum_{\pvec'\in \Q : p'_\theta=i/q} \gamma_{\pvec'}}\ge \\
		&\ge \sum_{\pvec \in \mathcal{P}^r : p_\theta=i/q} \frac{\gamma_\pvec}{\sum_{\pvec'\in \Q:p'_\theta=i/q} \gamma_{\pvec'}} \ge \\
		&\ge \left(1-\frac{\alpha}{2}\right) \sum_{\pvec \in \Q : p_\theta=i/q } \frac{\gamma_\pvec}{\sum_{\pvec'\in \Q:p'_\theta=i/q} \gamma_{\pvec'}} =\\
		&= \left(1-\frac{\alpha}{2}\right).
		\nonumber
		\end{aligned}
		\end{equation}
This concludes the proof.
	\end{proof}

	Now, we are ready to prove Equation~(\ref{eq:general_bound}).
	\begin{subequations}
		\begin{align}
		&\sum_\theta \sum_{\pvec \in \Q} \gamma_\pvec \,p_\theta \,g_\theta(\bvec^{\pvec,\eps}) \ge \label{eq: th general bound; sender utility_1}\\
		& \hspace{2.4cm}\textnormal{(By restricting the set of posteriors.)}\nonumber\\
		&\ge \sum_\theta \sum_{i:|i/q-p^\ast_\theta|\le \eps/4} i/q \sum_{\pvec:p_\theta=i/q} \gamma_\pvec \, g_\theta(\bvec^{\pvec,\eps}) = \\
		&= \sum_\theta \sum_{i:|i/q-p^\ast_\theta|\le \eps/4} i/q \left(\sum_{\pvec:p_\theta=i/q} \gamma_\pvec\right)\\ &\hspace{1.9cm}\sum_{\pvec:p_\theta=i/q} \frac{\gamma_\pvec  }{\sum_{\pvec':p'_\theta=i/q} \gamma_{\pvec'}} g_\theta(\bvec^{\pvec,\eps})\ge \nonumber\\
		&\hspace{4.6cm}\textnormal{(By Inequality~\eqref{eq:y^p .r.t. best response}.)}\nonumber\\
		&\ge \sum_\theta \sum_{i:|i/q-p^\ast_\theta|\le \eps/4} i/q \left(\sum_{\pvec:p_\theta=i/q} \gamma_\pvec\right) \label{eq:optimality}\\ &\hspace{2.1cm}\sum_{\pvec:p_\theta=i/q} \frac{\gamma_\pvec  }{\sum_{\pvec':p'_\theta=i/q} \gamma_{\pvec'}} g_\theta( \yvec^\pvec)\ge  \nonumber\\
		&\hspace{0.4cm}\textnormal{(By stability of $g$ compared to $h$ and Lemma~\ref{lemma3}.)}\nonumber \displaybreak \\
		&\ge \sum_\theta \sum_{i:|i/q-p^\ast_\theta|\le \eps/4} i/q \left(\sum_{\pvec:p_\theta=i/q} \gamma_\pvec\right) \\
		&\hspace{4.1cm}\left(1-\frac{\alpha}{2}\beta\right) \,h_\theta(\bvec^{\pvec^\ast}) = \nonumber\\
		&=\left(1-\frac{\alpha}{2}\beta\right) \sum_\theta h_\theta(\bvec^{\pvec^\ast}) \\ & \hspace{2.6cm}\sum_{i:|i/q-p^\ast_\theta|\le \eps/4} i/q \sum_{\pvec:p_\theta=i/q} \gamma_\pvec \ge \nonumber\\
		&\ge \left(1-\frac{\alpha}{2}\beta\right) \sum_\theta h_\theta(\bvec^{\pvec^\ast})\\
		& \hspace{2.1cm}  \left(p^\ast_\theta -  \sum_{i:|i/q-p^\ast_\theta|\ge \eps/4} \sum_{\pvec:p_\theta=i/q} \gamma_\pvec\right) \ge \nonumber\\
		&\hspace{5cm}\textnormal{(By Lemma \ref{lemma: Th.3 lemma_1}.)}\nonumber\\
		&\ge \left(1-\frac{\alpha}{2}\beta\right) \sum_\theta h_\theta(\bvec^{\pvec^\ast})\, \left(1-\frac{\alpha}{2}\right)\,p^\ast_\theta = \\
		&= \left(1-\frac{\alpha}{2}\beta\right)\,\left(1-\frac{\alpha}{2}\right) \sum_\theta p^\ast_\theta \, h_\theta(\bvec^{\pvec^\ast}) \ge \\
		& \hspace{3.7cm}\textnormal{(By $\alpha = \eta \, \min\{1, 1/\beta\}$.)}\nonumber\\
		&\ge (1-\eta)\sum_\theta p^\ast_\theta \, h_\theta(\bvec^{\pvec^\ast}). \label{eq: th general bound; sender utility_Last}
		\end{align}
	\end{subequations}
	This concludes the proof.
\end{proof}

\theoremTwo*

\begin{proof}
	For every constant $\beta, \eps > 0,\, \eta \in (0,1]$, by Theorem~\ref{th: general bound}, we know that any posterior $\pvec^* \in \P$ guaranteeing a value $h(\pvec^*)$ can be expressed as a convex combination of $q$-uniform posteriors such that $\sum_{\pvec \in \Q} \gamma_\pvec \, g_\eps(\pvec) \ge (1 - \eta) \, h(\pvec^*)$.
	Therefore, given the optimal persuasive public signaling scheme $\phi^*$ optimizing $h$, we can decompose each posterior probability distribution $\pvec \in supp(\phi^*)$ into a convex combination of $q$-uniform posteriors and obtain an $\eps$-persuasive public signaling scheme $\phi_\eps$ maximizing $g$ that satisfies the inequalities stated in the theorem.
	Let $q=32\log\left(\frac{4}{\eta\,\min\{1;\,1/\beta\}}\right)/\eps^2$. Since, for a fixed number of samples $q$, the number of $q$-uniform probability distributions is at most $|\Theta|^q$, we can search for the $\eps$-persuasive public signaling scheme maximizing $g$ over probability distributions $\pvec \in \Q$, by solving the following Linear Program composed of $\mathcal{O}(|\Q|)$ variables and constraints:
	\begin{subequations}\label{eq:lp1}
		\begin{align*}
		\max_{\gammavec \in \Delta_\Q} & \sum_{\pvec \in \Q}  \gamma_\pvec \sum_{\theta \in \Theta} p_\theta \, g_\theta(\bvec^{\pvec,\eps})\\
		\textnormal{s.t.} &\sum_{\pvec \in \Q} \gamma_\pvec \, p_\theta =\mu_\theta \qquad\forall \theta \in \Theta
		\end{align*}
	\end{subequations}
	Finally, given the probability distribution on the $q$-uniform posteriors $\gammavec \in \Delta_{\Q}$, it is easy to derive the corresponding public signaling scheme $\phi_\eps$ by setting the following for every $\theta\in\Theta$ and $\cvec\in \C$:
	\[
	\phi_\eps(\theta,\cvec)=\sum_{\pvec\in\Q:\bvec^{\pvec,\eps} =\cvec} \gamma_\pvec  \,p_\theta.
	\]
\end{proof}

\section{Omitted Proofs on ``Comparative Stability and Voting Functions''}

\lemmaTwo*

\begin{proof}
	To prove the first part of the lemma, we need to show that for every voting profile $\bar \cvec \in \C$ and $\alpha$-noisy probability distribution $\yvec$ around $\bar \cvec$ with $\alpha \in (0,1]$,  the following inequality holds:
	\begin{equation}
	\label{inequality: stability}
	\EX_{\tilde \yvec \sim{} \yvec}\left[ W_\delta(\boldsymbol{\tilde{\yvec}})\right] = \sum_{\textbf{c} \in \C} y_{\textbf{c}} \, W_\delta(\textbf{c}) \geq W(\bar \cvec) \, \left(1-\frac{\alpha}{\delta}\right).
	\end{equation}
	Given that $W$ and $W_\delta$ assume values exclusively in $\{0,1\}$, Inequality \eqref{inequality: stability} is satisfied, independently from the chosen distribution $\yvec$, for all the voting profiles $\bar \cvec$ such that $W(\bar \cvec) = 0$. 
	Therefore, we can restrict our attention to the set of voting profiles such that $W(\bar \cvec) = 1$.
	Let $V_{c_0}(\cvec)=\{r \in R: c_r=c_0\}$ and $C^-=\{\cvec: |V_{c_0}(\cvec)|\le \ceil{(1-\delta) |R|/2} - 1 \}$.
	Then, for every $\yvec$, the following holds
	\begin{subequations}
		\begin{align*}
		\alpha |V_{c_0}(\bar \cvec)|&\ge \\
		&\ge \sum_{r \in V_{c_0}( \bar \cvec)} \sum_{\cvec \in \C:  c_r=c_1}  y_\cvec  \ge \\
		&\ge \sum_{\cvec \in \C^- } \sum_{r \in V_{c_0}( \bar \cvec):c_r=c_1} y_\cvec \ge\\
		&\ge \left[|V_{c_0}(\bar \cvec)| - \ceil{(1-\delta) |R|/2} - 1\right] \sum_{\cvec \in \C^-} y_\cvec  \ge \\
		&\ge  [|V_{c_0}(\bar \cvec)| - (1-\delta) |R|/2 ] \sum_{\cvec \in \C^-} y_\cvec  =\\
		&= [|V_{c_0}(\bar \cvec)| - \ceil{(1-\delta) |R|/2} ] (1-\EX_{\tilde \yvec \sim{} \yvec} \left[ W_\delta(\boldsymbol{\tilde{y}})\right]).
		\end{align*}
	\end{subequations}
	%where the first inequality follows from the fact that at least 
	This implies that 
	\begin{subequations}
		\begin{align*}
		\EX_{\tilde \yvec \sim{} \yvec} \left[ W_\delta(\boldsymbol{\tilde{y}})\right] &\ge 1-\frac{\alpha |V_{c_0}(\bar \cvec)|}{|V_{c_0}(\bar \cvec)| - (1-\delta) |R|/2}=\\
		&=1-\frac{\alpha }{1 - (1-\delta) \frac{|R|/2}{|V_{c_0}(\bar \cvec)|}}\ge\\
		&\ge (1-\frac{\alpha}{\delta}) W(\bar \cvec),
		\end{align*}
	\end{subequations}
	where the last inequality follows from
	\[
		\frac{|R|/2}{|V_{c_0}(\bar \cvec)|}\le \frac{|R|/2}{\ceil{|R|/2}}\le 1 
	\] 
	and from $W(\bar \cvec)=1$ by assumption.

	Finally, to prove the second part of the lemma, we can employ Algorithm~\ref{Algorithm1} to show that for all $\bar{\textbf{c}} \in \C$ and for all probability distributions $\yvec$ around $\bar \cvec$ such that $\Pr_{\tilde \yvec \sim \yvec}[\tilde y_r =c_1 \land \bar{c}_r=c_0] \leq \alpha$, there is an $\alpha$-noisy probability distribution $\yvec'$ guaranteeing 
	\[ 
	\EX_{\tilde \yvec \sim \yvec}\left[ W_\delta(\boldsymbol{\tilde{y}})\right] \ge \EX_{\tilde \yvec \sim \yvec'}\left[ W_\delta(\boldsymbol{\tilde{y}})\right] \geq W(\bar{\textbf{c}})\,\left(1-\frac{\alpha}{\delta}\right).
	\]
	It is easy to see that $\yvec'$ is $\alpha$-noise: $\yvec'$ has null probability on all the voting profiles $\textbf{c}$ with a $r \in R$ such that $ c_r=c_0 \land \bar{c}_r=c_1$, \emph{i.e.}, $V_{c_0}(\cvec) \subsetneq V_{c_0}( \bar \cvec)$, while, $\Pr_{\tilde \yvec \sim \yvec'}[\tilde y_r =c_1 \land \bar{c}_r=c_0]= \Pr_{\tilde \yvec \sim \yvec}[\tilde y_r =c_1 \land \bar{c}_r=c_0] \le \alpha$.
	Moreover, since Algorithm~\ref{Algorithm1} moves probability mass from an action profile $\cvec$ to an action profile $\cvec'$ with $V_{c_0}(\cvec')\subseteq V_{c_0}(\cvec)$, it does not increase the expected value of $W_\delta$.
	This concludes the proof.
	\begin{algorithm}
		\caption{}%{\textbf{Algorithm 1}}
		\label{Algorithm1}
		For any $\textbf{c}$ s.t $V_{c_0}(\cvec) \subsetneq V_{c_0}(\bar \cvec)$ :
		
		\hspace{1cm} Take $\textbf{c}': V_{c_0}(\textbf{c}')  = V_{c_0}(\textbf{c}) \cap V_{c_0}(\bar{\textbf{c}})$ 
		
		\hspace{1cm}  $y'_{\textbf{c}'} \leftarrow{} y_{\textbf{c}'} + y_{\textbf{c}}$
		
		\hspace{1cm}  $y'_{\textbf{c}} \leftarrow{} 0$
		
		\vspace{0.2cm}
	\end{algorithm}
\end{proof}

\lemmaThree*
\begin{proof}

	We need to prove that the following inequality holds for all $\textbf{c} \in C^{|R|}$ and $\alpha$-noisy distribution $\yvec$ around $\textbf{c}$ with $\alpha \in (0,1]$.
	\[
		\EX_{\tilde \yvec \sim \yvec} \left[ \mathcal{W}_{\delta\delta}(\boldsymbol{\tilde{\yvec}})\right] \geq \mathcal{W}(\cvec) \left(1 - \frac{\alpha}{\delta^2} \right).
	\]
	The value of function $\W_{\delta\delta}$ depends on the values of all the district functions $W^d_{\delta}$.
	Indeed, given a voting profile $\textbf{c} \in \C$, the function $\W_{\delta\delta}$ assumes value $\W_{\delta\delta}(\textbf{c}) = \bar{W}_\delta( \,W^1_{\delta}(\, \textbf{c}^1),\ldots, W^D_{\delta}(\textbf{c}^D)\,)$. 
	Therefore, when it is perturbed by an $\alpha$-noisy probability distribution $\yvec$, its expected value can be expressed as:
	\[	
		\EX_{\tilde \yvec \sim \yvec} \left[ \mathcal{W}_{\delta\delta}(\boldsymbol{\tilde{y}})\right] =	\EX_{\tilde \yvec \sim \yvec}\left[\,\bar{W}_\delta( \,W^1_{\delta}(\boldsymbol{\tilde{y}^1}),\ldots, W^D_{\delta}(\boldsymbol{\tilde{y}}^D)\, )\,\right] .
	\]
	Lemma \ref{lemma: stability of W_delta wrt W} can be applied to all the couples of functions $W^d,W^d_{\delta}$, deriving the following inequality
	for every $d \in D$, $\textbf{c} \in C^{|R|}$, $\alpha \in (0,1]$: 
		\begin{equation}\label{eq:property}
		\Pr_{\tilde \yvec \sim \yvec}\left(W^d_{\delta} (\boldsymbol{\tilde{y}}^d)=c_1 \land W^d (\textbf{c}^d)=c_0\right)\le \alpha/\delta.
		\nonumber
		\end{equation} 
		If $W^d(\cvec^d)=c_1$, the above inequality is trivially satisfied, whereas, if $W^d(\cvec^d)=c_0$, we can write
		\begin{subequations}
			\begin{align*}
			&\Pr_{\tilde \yvec \sim \yvec}\left(W^d_{\delta} (\tilde{\yvec}^d)=c_1 \land W^d (\textbf{c}^d)=c_0\right)=\\
			&=\Pr_{\tilde \yvec \sim \yvec}\left(W^d_{\delta} (\boldsymbol{\tilde{y}}^d)=c_1\right) = 1-\EX_{\tilde \yvec \sim \yvec}\left[W_\delta(\tilde \yvec^d)\right] \le\\
			& \le 1-\left(1-\frac{\alpha}{\delta} \right)W(\cvec^d) =\alpha /\delta. 
			\end{align*}
		\end{subequations}
	We can use the above inequality and the fact that $\bar W$ is a majority-voting function to apply Lemma~\ref{lemma: stability of W_delta wrt W} to the couple of functions $\bar W$ and $\bar W_\delta$, thus showing the following:
	\begin{subequations}
		\begin{multline*}
		\EX_{\tilde \yvec \sim \yvec}\left[ \bar W_\delta \left(W^1_{\delta}( \tilde{y}^1),\dots,W^{|D|}_{\delta} (\tilde{\yvec}^{|D|})\right)\right] \geq\\
		\hspace{2cm}\ge\bar W\left(W^1( \cvec^1),\dots,W^{|D|} (\cvec^{|D|}) \right)  \left(1-\frac{\alpha}{\delta^2}\right).
		\end{multline*}
	\end{subequations}
	This implies that $\mathcal{W}_{\delta\delta}$ is $1/\delta^2$ stable compared to $\mathcal{W}$.
\end{proof}

\lemmaFour*

\begin{proof}
	The proof follows the same steps of the proof of Lemma~\ref{th: general bound}.
	In the following, we just highlight the differences between the two proofs.
	In the steps from Equation~\eqref{eq: th general bound; sender utility_1} to Equation~\eqref{eq: th general bound; sender utility_Last}, we remove the summation over the states of nature.
	All the other steps hold, except for Equation~\eqref{eq:optimality}. 
	Indeed, since $\eps$-best response is computed maximizing the expected utility of the sender, there are no guarantees that for each state of nature $\theta$ it holds $g_\theta(\bvec^{\pvec,\eps}) \ge g_\theta(\yvec^{\pvec})$.
	However, since $W_\delta$ is state-independent and monotone non-decreasing in the number of receivers that vote for $c_0$, the best response $\bvec^{\pvec,\eps}$ is given by $b_r^{\pvec,\eps} = c_0$ for all the voters with utility $u_r(\theta)\ge -\eps$. 
	Thus, we are guaranteed that, for every $\yvec^\pvec \in \C$, it holds $W_\delta(\yvec^\pvec) \le W_\delta(\bvec^{\pvec,\eps})$ independently from the state of nature $\theta$.
	Taking into account Lemma \ref{lemma: stability of W_delta wrt W}, the derivation is straightforward.
\end{proof}

\section{Omitted Proofs on ``Computing a Semi-Public Signaling Scheme''}

\theoremThree*

\begin{proof} 

	Let $q=32\log\left(\frac{4}{\,\eta\,\delta\,}\right)/\eps^2$ and $\Q \subset \Delta_\Theta$ be the set of $q$-uniform probability distributions on $\Theta$. 
	We show that, given the optimal semi-public signaling scheme $\phi^*$, there is a solution $\phi_\epsilon$ to LP~\eqref{lp:semi_efficient} with $\W_\delta(\phi_\eps) \ge (1 - \eta)  \W(\phi^*)$.
	Given the signaling scheme $\phi^*$, let:
	\begin{itemize}
	%
	%\item $\gammavec^d \in \Delta_\P$ be the probability distribution on the posteriors induced by signaling scheme $\phi^*$ for the receivers in district $d$, 
	%
	\item $\betadue^*_{d,\theta}$ be the probability that $c_0$ wins in district $d$ when the state of nature is $\theta$ and 
	\item $\alpha^*_\theta$ be the probability that $c_0$ wins in at least $K_d$  when the state of nature is $\theta$.
	\end{itemize}
	Then, as showed in Theorem~\ref{th: optimal_private}, the probability such that $c_0$ wins in at least $K_D$ districts with state of nature $\theta$ is:   
	\begin{equation}\label{eq:alpha}
	\alpha^*_\theta= \min\left\{\min_{m\in\{0,\ldots,K_D-1\}} \frac{1}{K_D-m}\qvar_{\theta,m}; \, 1 \, \right\},
	\end{equation}
	where $\qvar_{\theta,m}$ is the sum of the lowest $|R^d|-m$ elements in the set $\{\betadue^*_{d,\theta}\}_{d\in D}$.
	We show that there is a solution to LP~\eqref{lp:semi_efficient} with $\betadue^\delta_{d,\theta}\ge (1-\eta) \betadue^*_{d,\theta}$ for every $d$ and $\theta$. 
	Since the value of each  $\betadue_{d,\theta}$ is reduced by a multiplicative factor $(1-\eta)$,  Equation~\eqref{eq:alpha} implies that $\alpha_\theta \ge (1-\eta) \alpha^*_\theta$ and $\sum_{\theta}\mu_\theta\alpha_\theta\ge (1-\eta) \sum_{\theta}\mu_\theta\alpha^*_\theta$.\footnote{See Theorem~\ref{th: optimal_private} for details on how LP~\eqref{lp:semi_efficient} computes $\alpha_\theta$ from $\beta^\delta$.}

	Hence, we conclude the proof showing that  $\betadue^\delta_{d,\theta} \ge (1-\eta)\betadue^*_{d,\theta}$ for every $d$ and $\theta$.
	Let:
	\begin{itemize} 
	\item $\phi^*_d$ be the marginal probabilities of the signaling scheme $\phi$ restricted to the receivers in district $d$,
	\item $\gammavec^* \in \Delta_\P$ be the probability distribution on posteriors induced by $\phi^*_d$,
	\item $\gammavec^\pvec \in \Delta_\P$ be the probability distribution on $q$-uniform posteriors obtained decomposing a posterior $\pvec$ as prescribed by Lemma~\ref{corollary: th_general bound}, and  
	\item $\gammavec^d \in \Delta_\Q$ be the distribution on $q$-uniform posteriors obtained by decomposing each posterior induced by $\phi^*_d$ as in Lemma~\ref{corollary: th_general bound}, \emph{i.e.,} $\gamma^d_\pvec=\sum_{\pvec' \in supp(\phi^*)} \gamma^*_{\pvec'} \, \gamma^{\pvec'}_\pvec$ for every $\pvec$.
	\end{itemize}
	We conclude proving that $\gammavec^d$ is a $q$-uniform distribution that induces a $\betadue^\delta_{d, \theta}\ge (1-\eta)\betadue^*_{d,\theta}$ for every $\theta$.
	\begin{equation}
	\label{eq: th_semi_efficient_proof}
	\begin{aligned}
	&(1-\eta)\betadue^*_{d,\theta}=\\
	&=(1-\eta)\sum_{\substack{\pvec \in supp(\phi^*_d)}} \frac{\gamma_\pvec^* \, p_\theta}{\mu_\theta} \mathbb{I}\left(W^d(\bvec^\pvec) = c_0\right) \le\\
	%(1-\eta)\sum_{\pvec \in supp(\phi^*_d)} \frac{\gamma_\pvec^d \cdot p_\theta}{\mu_\theta}W(\bvec^\pvec) \le\\
	& \hspace{5cm}(\textnormal{by Lemma~\ref{corollary: th_general bound}})\\
	&\le \sum_{\pvec \in supp(\phi^*_d)} \frac{\gamma_\pvec^*}{\mu_\theta}  \sum_{\pvec' \in \Q} \gamma^\pvec_{\pvec'} p'_\theta \mathbb{I}\left(W_\delta(\bvec^{\pvec',\eps})=c_0\right) = \\
	%\sum_{p \in Supp(\phi^*_d)} \frac{\gamma_p^d \cdot p_\theta}{\mu_\theta}  W_\delta(b^p) \le \\
	&=\sum_{\pvec' \in \Q} \frac{p'_\theta}{\mu_\theta} \mathbb{I}\left(W_\delta(\bvec^{\pvec',\eps})=c_0\right)\sum_{\pvec \in supp(\phi^*_d)} \gamma_\pvec^* \gamma^\pvec_{\pvec'}   =\\
	&= \sum_{\pvec \in \Q} \frac{\gamma_\pvec^d  p_\theta}{\mu_\theta}\mathbb{I}\left(W_\delta(\bvec^{\pvec,\eps})=c_0\right) =\\
	&=\betadue^\delta_{d,\theta}. \nonumber
	\end{aligned}
	\end{equation}	
	This concludes the proof.
\end{proof}
\end{document}